\newtheorem{example}{Example}
\newtheorem{definitn}{Definition}
\newtheorem{remrk}{Remark}
\newtheorem{thm}{Theorem}
\newtheorem{lemma}{Lemma}
\newtheorem{prop}{Proposition}
\def\ee{{\epsilon}}
\def\be{\begin{equation}}
\def\bea{\begin{eqnarray}}
\def\beas{\begin{eqnarray*}}
\def\eea{\end{eqnarray}}
\def\eeas{\end{eqnarray*}}
\def\bi{\begin{itemize}}
\def\ee{\end{equation}}
\def\ei{\end{itemize}}
\def\z1{z^{-1}}
\def\la{\label}
\def\mR{\mathbb{R}}
\def\diag{\mbox{ diag } }
\begin{document}

\title{On the Quality of Wireless Network Connectivity}
\author{\begin{tabular}{cc}
Soura Dasgupta & Guoqiang Mao\tabularnewline
Department of Electrical and Computer Engineering  & School of Electrical and Information Engineering\tabularnewline
The University of Iowa & The University of Sydney\tabularnewline
 & National ICT Australia\tabularnewline
\end{tabular}\emph{}%
\thanks{This research is partially supported by US NSF grants ECS-0622017,
CCF-072902, and CCF-0830747.}
\thanks{This research is partially supported by ARC Discovery project
DP110100538 and by the Air Force Research Laboratory, under agreement number FA2386-10-1-4102.
The U.S. Government is authorized to reproduce and distribute reprints for
Governmental purposes notwithstanding any copyright notation thereon. The
views and conclusions contained herein are those of the authors and should not
be interpreted as necessarily representing the official policies or endorsements,
either expressed or implied, of the Air Force Research Laboratory or the U.S.
Government.}}
\maketitle
\begin{abstract}
Despite intensive research in the area of network connectivity, there is an important category
of problems that remain unsolved: how to measure the
\emph{quality of connectivity} of a wireless multi-hop network which has
 a realistic number of nodes,  \emph{not necessarily}
large enough to warrant the use of asymptotic analysis, and has unreliable
connections, reflecting the inherent unreliable
characteristics of wireless communications? The quality of connectivity measures how easily and reliably a packet sent by a  node can reach another  node. It complements the use of \emph{capacity} to measure the quality of a network in saturated traffic scenarios and provides a native measure of the quality of (end-to-end) network connections. In this paper, we explore the use of probabilistic connectivity matrix as a possible tool to measure the quality of network connectivity.
Some interesting properties of the probabilistic connectivity matrix
and their connections to the quality of connectivity are demonstrated. We argue that the largest eigenvalue of the probabilistic connectivity matrix can serve as a good measure of the quality of network connectivity.
\end{abstract}
\begin{keywords}
Connectivity, network quality, probabilistic connectivity matrix
\end{keywords}
\thispagestyle{empty}
\section{Introduction\label{sec:Introduction}}

Connectivity is one of the most fundamental properties of wireless
multi-hop networks \cite{Gupta98Critical,Haenggi09Stochastic,Penrose03Random},
and is a prerequisite for providing many network functions, e.g. routing,
scheduling and localization. A network is said to be \emph{connected}
if and only if (iff) there is a (multi-hop) path between any pair of nodes. Further, a network is said to be $k$-connected iff there are $k$ mutually independent paths between any pair of nodes that do not share any node in common except the starting and the ending nodes. $k$-connectivity is often required for robust operations of the network.

There are two general approaches to studying the connectivity problem. 
The first, spearheaded by the seminal work of Penrose \cite{Penrose03Random}
and Gupta and Kumar \cite{Gupta98Critical}, is based on an asymptotic
analysis of large-scale random networks, which considers a
network of $n$ nodes that are  \emph{i.i.d.} on an area with an underlying uniform distribution. A
pair of nodes are directly connected iff their Euclidean
distance is smaller than or equal to a given threshold $r\left(n\right)$,
independent of other connections. Some interesting results are obtained
on the value of $r\left(n\right)$ required for the above network
to be \emph{asymptotically almost surely} connected as $n\rightarrow\infty$.
In \cite{Mao11ConnectivityInfinite,Mao11ConnectivityIsolated}, the authors
extended the above results by Penrose and Gupta and Kumar from the
unit disk model to a random connection model, in which any pair of
nodes separated by a displacement $\boldsymbol{x}$ are directly connected
with probability $g\left(\boldsymbol{x}\right)$, independent of other
connections. 
The analytical
techniques used in this approach have some intrinsic connections
to continuum percolation theory \cite{Meester96Continuum} which
is usually based on a network setting with nodes Poissonly distributed
in an infinite area and studies the conditions required for
the network to have a connected component containing an infinite number
of nodes (in other words, the network \emph{percolates}). We refer readers to \cite{Mao10TowardsExtended} for a more comprehensive review of work in the area.

The second approach is based on a deterministic setting and studies
the connectivity and other topological properties of a network using
 algebraic graph theory. Specifically, consider a network with a
set of $n$ nodes. Its property can be studied using its \emph{underlying
graph} $G\left(V,E\right)$, where $V\triangleq\left\{ v_{1},\ldots,v_{n}\right\} $
denotes the vertex set and $E$ denotes the edge set. The underlying
graph is obtained by representing each node in the network uniquely
using a vertex and the converse. An undirected edge exists between
two vertices iff there is a direct connection (or link) between the
associated nodes\footnote{In this paper, we limit our discussions to a \emph{simple graph} (network) where there is at most one edge (link) between a pair of vertices
(nodes) and an undirected graph.}. Define an \emph{adjacency matrix} $A_{G}$ of the graph $G\left(V,E\right)$
to be a symmetric $n\times n$ matrix whose $\left(i,j\right)^{th},i\neq j$
entry is equal to one if there is an edge between $v_{i}$ and $v_{j}$
and is equal to zero otherwise. Further, the diagonal entries
of $A_{G}$ are all equal to zero. The \emph{eigenvalues of the graph
}$G\left(V,E\right)$ are defined to be the eigenvalues of $A_{G}$.
The network connectivity information, e.g. connectivity and
$k$-connectivity, is entirely contained in its adjacency matrix.
Many interesting connectivity and topological properties of the network
can be obtained by investigating the eigenvalues of its underlying
graph. For example, let $\mu_{1}\geq\ldots\geq\mu_{n}$ be
the eigenvalues of a graph $G$. If $\mu_{1}=\mu_{2}$, then
$G$ is disconnected. If $\mu_{1}=-\mu_{n}$ and $G$ is not
empty, then at least one connected component of $G$ is nonempty
and bipartite \cite[p. 28-6]{Hogben07Handbook}. If the number of
distinct eigenvalues of $G$ is $r$, then $G$ has a \emph{diameter} of at most $r-1$ \cite{Biggs74Algebraic}. Some researchers have also
studied the properties of the underlying graph using its Laplacian matrix \cite{Mohar92Laplace},
where the Laplacian matrix of a graph $G$ is defined as $L_{G}\triangleq D-A_{G}$
and $D$ is a diagonal matrix with degrees of vertices in $G$ on the diagonal.
Particularly, the \emph{algebraic connectivity} of a graph $G$ is
the second-smallest eigenvalue of $L_{G}$ and it is greater
than $0$ iff $G$ is a connected graph.  We refer readers to \cite{Biggs74Algebraic} and \cite{Godsil01Algebraic} for a comprehensive treatment of the topic. Reference \cite{Hogben07Handbook}
provides a concise summary of major results in the area.

The most related research to the work to be presented in this paper is possibly the more recent work of
Brooks et al. \cite{Brooks07Mobile}. In \cite{Brooks07Mobile} Brooks
\emph{et al.} considered a probabilistic version of the adjacency
matrix and defined a \emph{probabilistic adjacency
matrix}
as a $n\times n$ square matrix $M$ whose $(i,j)^{th}$ entry $m_{ij}$
represents the probability of having a direct connection between distinct
nodes $i$ and $j$, and $m_{ii}=0$. They established that the probability
that there exists at least one walk of length $z$ between nodes $i$
and $j$ is $m_{ij}^{z}$, where $m_{ij}^{z}$ is the $(i,j)^{th}$
entry of $M\otimes M\otimes\cdots\otimes M$ ($z$ times). Here $C\triangleq A\otimes B$ is defined by
 $C_{ij}=1-\underset{l\neq i,j}{\prod}\left(1-A_{il}B_{lj}\right)$
where $A_{ij}$, $B_{ij}$ and $C_{ij}$ are the $(i,j)^{th}$ entries
of the $n\times n$ square matrix $A$, $B$ and $C$ respectively.
A \emph{walk} of length $z$ between nodes $i$ and $j$ is a sequence
of $z$ edges, where the first
edge starts at $i$, the last edge ends at $j$, and the starting
vertex of each intermediate edge is the ending vertex of its preceding
edge. A \emph{path} of length $z$ between nodes $i$ and $j$ is
a walk of length $z$ 
in which the edges
are distinct.

Despite intensive research in the area, there is an important category
of problems that remain unsolved: how to measure the
{\it quality of connectivity} of a wireless multi-hop network which has
 a realistic number of nodes,  \emph{not necessarily}
large enough to warrant the use of asymptotic analysis, and has unreliable
connections, reflecting the inherent unreliable
characteristics of wireless communications? The quality of connectivity measures how easily and reliably a packet sent by a  node can reach another  node. It complements the use of \emph{capacity} to measure the quality of a network in saturated traffic scenarios and provides a native measure of the quality of (end-to-end) network connections. In the following paragraphs,
we elaborate on the above question using using two examples.

\begin{example}
\label{exa:probabilistic connectivity matrix}
Consider a network with
a fixed number of nodes with known transmission power to be deployed
in a certain environment. Assume that the wireless propagation model in that
environment is known and its characteristics have been quantified through \emph{a priori} measurements or empirical
estimation. Further, a link exists between two nodes iff the received signal strength from one node at the other node,
whose propagation follows the wireless propagation model, is greater
than or equal to a predetermined threshold \emph{and} the same is
also true in the opposite direction. 
One can then find the probability that a link exists between two    nodes
at two fixed locations: It is  determined by the probability that
the received signal strength is greater than or equal to the pre-determined
threshold. Two related questions can be asked: a) If these nodes
are deployed at a set of known locations, what is the quality of connectivity
of the network, measured by the probability that there is a path between
any two nodes, as compared to node deployment at another set of locations?
b) How to optimize the node deployment to maximize the quality of
connectivity?
\end{example}

\begin{example}
\label{exa:transmission quality}Consider a network with a fixed number
of nodes. The transmission between a pair of nodes with a direct connection,
say $v_{i}$ and $v_{j}$, may fail with a known probability, say
$a_{ij}$, quantifying  the inherent unreliable characteristics
of wireless communications. There are no  direct connections between
some pairs of nodes because the probability of   successful transmission between them
is too low to be acceptable. How to measure the quality of connectivity
of such a network, in the sense that a packet transmitted from one
node can easily and reliably reach another node via a multi-hop path. Will a single
``good'' path between a pair of nodes be more preferable than
multiple ``bad'' paths? These are further illustrated using Fig.
\ref{fig:quality of connectivity example 1} and \ref{fig:quality of connectivity example 2}.
\end{example}
\begin{figure}
\begin{centering}
\includegraphics[scale=0.45]{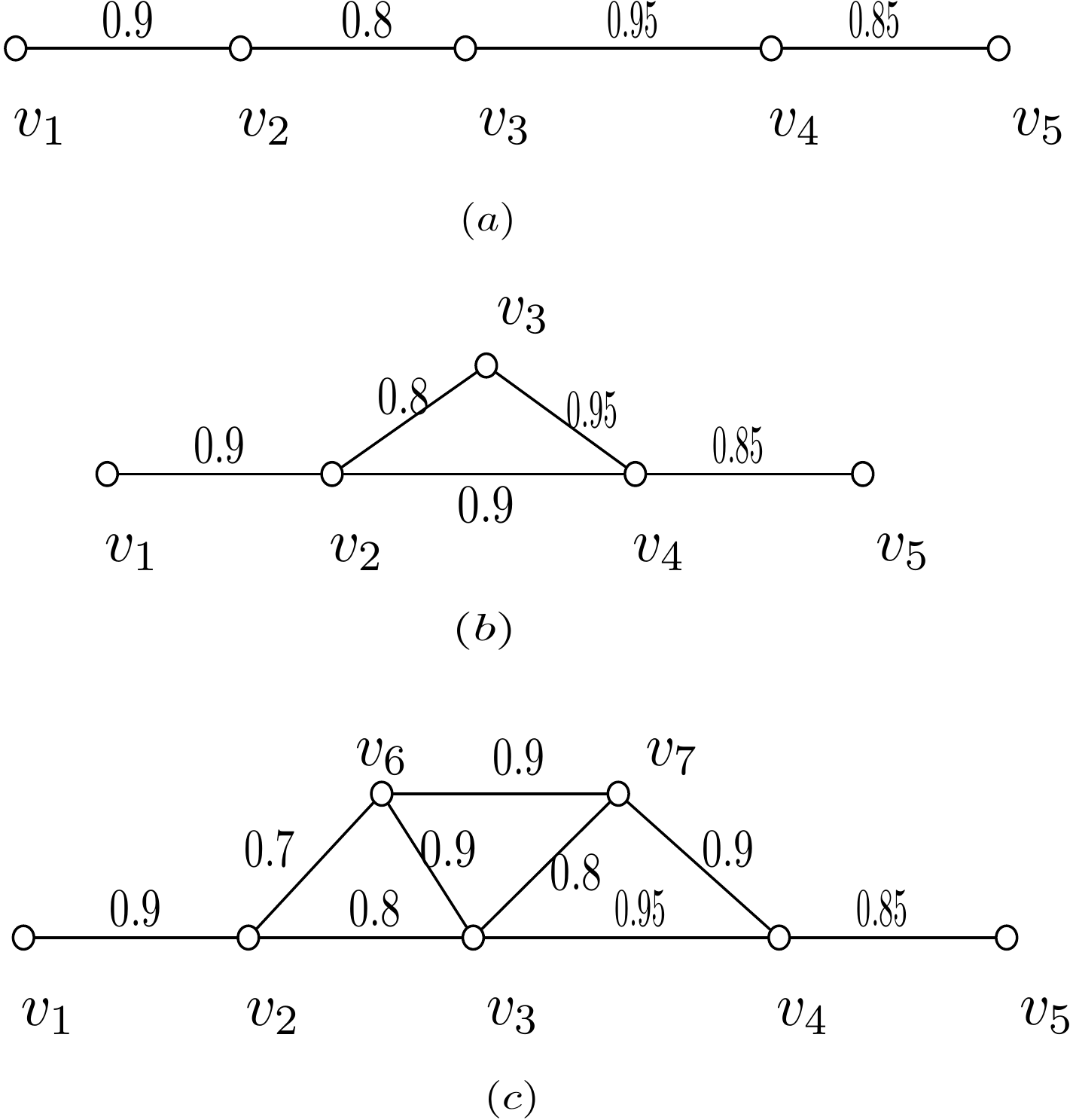}
\par\end{centering}
\caption{An illustration of networks with different quality of connectivity.
A solid line represents a direct connection between two nodes and
the number beside the line represents the corresponding transmission
successful probability. The networks shown in (a), (b), and (c) are
all connected networks but not 2-connected networks, i.e. their connectivity
cannot be differentiated using the k-connectivity concept. However
intuitively the quality of the network in (b) is better
than that of the network in (a) because of the availability of the
additional high-quality link between $v_{2}$ and $v_{4}$ in
(b). The quality of the network in (c) is
even better because of the availability of the additional nodes and
the associated high-quality links, hence additional routes, if these additional nodes act as relay nodes only. If these additional nodes also generate their own traffic, it is uncertain whether the quality of the network in (c) is better or not.
Therefore it is important to develop a measure to quantitatively compare
the quality of connectivity (for the networks in (a) and (b)) and
to evaluate the benefit of additional nodes on connectivity (for the
network in (c)). \label{fig:quality of connectivity example 1}}
\end{figure}
\begin{figure}
\begin{centering}
\includegraphics[scale=0.45]{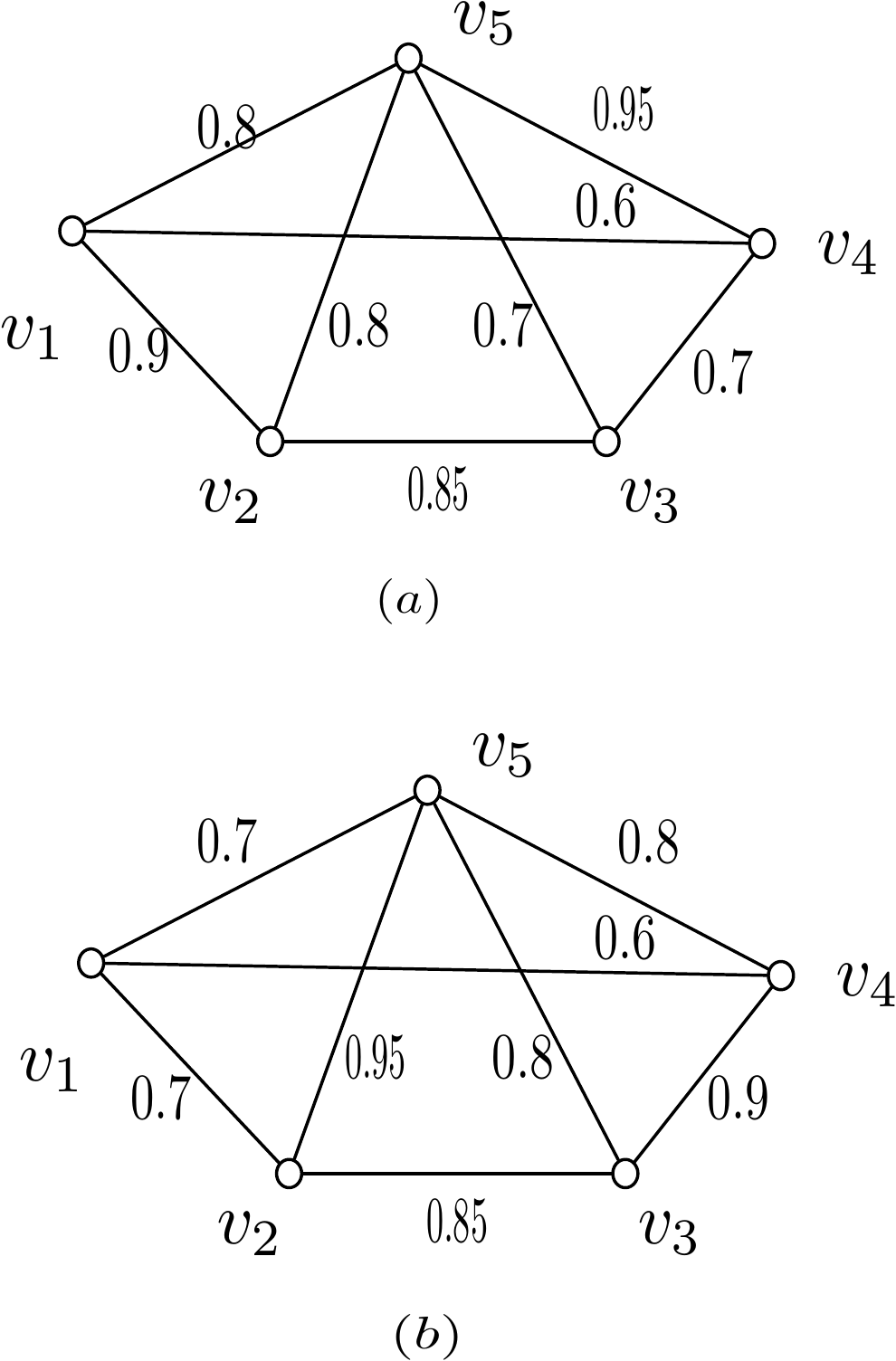}
\par\end{centering}
\caption{The networks shown in (a) and (b) have the same topology but different
link quality. It is difficult to compare the quality of the two networks. \label{fig:quality of connectivity example 2}}
\end{figure}

In this paper, we explore the use of probabilistic connectivity matrix,
a concept to be defined later in Section \ref{sec:probabilistic connectivity matrix},
as a possible tool to measure the quality of network connectivity.
Some interesting properties of the probabilistic connectivity matrix
and their connections to the quality of connectivity are demonstrated. 

The rest of the paper is organized as follows. Section \ref{sec:probabilistic connectivity matrix}
defines the network settings, the probabilistic connectivity matrix
and gives a method to compute the matrix. Section \ref{sec:General-Properties-of probabilistic connectivity matrix}
introduces certain inequalities associated with the entries  of the probabilistic connectivity matrix. Section \ref{sec:qprop} proves several important results about the   probabilistic connectivity matrix. These directly associate the largest eigenvalue of the  probabilistic connectivity matrix to the quality of connectivity  and
expose a structure that holds the promise of facilitating associated optimization tasks. 
 Section \ref{sec:Conclusions}
concludes the paper and discusses future work.

\section{Definition and Construction of the Probabilistic Connectivity Matrix\label{sec:probabilistic connectivity matrix}}

In this section we define the network to be studied, its probabilistic
adjacency matrix and probabilistic connectivity matrix, and gives
an approach to computing the probabilistic connectivity matrix.

Consider a network of $n$ nodes. For some pair of nodes, an edge
(or link) may exist with a non-negligible probability. The edges
are considered to be undirected. That is, if a node $v_{i}$ is
connected to a node $v_{j}$, then the node $v_{j}$ is also connected
to the node $v_{i}$. Further, it is assumed that the event that there
is an edge between a pair of nodes and the event that there is an
edge between another distinct pair of nodes are independent.

Denote the underlying graph of the above network by $G\left(V,E\right)$,
where $V=\left\{ v_{1},\ldots,v_{n}\right\} $ is the vertex set and
$E=\left\{ e_{1},\ldots,e_{m}\right\} $ is the edge set, which contains
the set of \emph{all possible} edges. Here both the vertices and the
edges are indexed from $1$ to $n$ and from $1$ to $m$ respectively.
For convenience, in some parts of this paper we also use the symbol
$e_{ij}$ to denote an edge between vertices $v_{i}$ and $v_{j}$
when there is no confusion. We associate with each edge $e_{i}$,
$i\in\left\{ 1,\ldots m\right\} $, an indicator random variable $I_{i}$
such that $I_{i}=1$ if the edge $e_{i}$ exists; $I_{i}=0$ if the
edge $e_{i}$ does not exist. The indicator random variables $I_{ij}$,
$i\neq j$ and $i,j\in\left\{ 1,\ldots n\right\} $, are defined analogously. 

In the following, we give a definition of the probabilistic adjacency
matrix:
\begin{definitn}
The probabilistic adjacency matrix of $G\left(V,E\right)$, denoted
by $A_{G}$, is a $n\times n$ matrix such that its $(i,j)^{th}$,
$i\neq j$, entry $a_{ij}\triangleq\Pr\left(I_{ij}=1\right)$ and
its diagonal entries are all equal to $1$.
\end{definitn}

Due to the undirected property of an edge mentioned above, 
 $A_{G}$ is a symmetric matrix, i.e.  $a_{ij}=a_{ji}$.
 Note that the diagonal entries of $A_{G}$ are defined to be
$1$, which is different from that common in the literature. In \cite{Mao09Graph}
we have discussed the implication of this definition in the context
of mobile ad-hoc networks. This treatment of the diagonal entries
can be associated with the fact that a node in the network can keep
a packet until better transmission opportunity arises when it finds
the wireless channel busy. 

The probabilistic connectivity matrix is defined in the following:
\begin{definitn}
The probabilistic connectivity matrix of $G\left(V,E\right)$, denoted
by $Q_{G}$, is a $n\times n$ matrix such that its $(i,j)^{th}$,
$i\neq j$, entry is the probability that there exists a path between
vertices $v_{i}$ and $v_{j}$, and its diagonal entries are
all equal to $1$.
\end{definitn}

As a ready consequence of the symmetry of $A_{G}$, $Q_{G}$ is also
a symmetric matrix. 

Given the probabilistic adjacency matrix $A_{G}$, the probabilistic
connectivity matrix $Q_{G}$ is fully determined. However the computation
of $Q_{G}$ is not trivial because for a pair of vertices $v_{i}$
and $v_{j}$, there may be multiple paths between them and some of
them may share common edges, i.e. are not \emph{independent}. In the
following paragraph, we give an approach to computing the probabilistic
connectivity matrix.

Let $\left(I_{1},\ldots,I_{m}\right)$ be a particular instance of
the indicator random variables associated with an instance of the
random edge set. Let $\left.Q_{G}\right|\left(I_{1},\ldots,I_{m}\right)$
be the connectivity matrix of $G$ \emph{conditioned on }$\left(I_{1},\ldots,I_{m}\right)$.
The $\left(i,j\right)^{th}$ entry of $\left.Q_{G}\right|\left(I_{1},\ldots,I_{m}\right)$
is either $0$, when there is no path between $v_{i}$ and $v_{j}$,
or $1$ when there exists a path between $v_{i}$ and $v_{j}$ (see also Lemma \ref{lem:q_ij must be either 0 or 1} in Appendix I). The
diagonal entries of $\left.Q_{G}\right|\left(I_{1},\ldots,I_{m}\right)$
are always $1$. Conditioned on $\left(I_{1},\ldots,I_{m}\right)$,
$G\left(V,E\right)$ is just a deterministic graph. Therefore the
entries of $\left.Q_{G}\right|\left(I_{1},\ldots,I_{m}\right)$ can
be efficiently computed using a search algorithm, such as breadth-first
search. Given $\left.Q_{G}\right|\left(I_{1},\ldots,I_{m}\right)$,
$Q_{G}$ can be computed using the following equation:
\begin{equation}
Q_{G}=E\left(\left.Q_{G}\right|\left(I_{1},\ldots,I_{m}\right)\right)\label{eq:Computation of the probabilistic connectivity matrix}\end{equation}
where the expectation is taken over all possible instances of $\left(I_{1},\ldots,I_{m}\right)$. 

The approach suggested in the last paragraph is essentially a brute-force
approach to computing $Q_{G}$. We expect that more efficient algorithms
can be designed to compute $Q_{G}$. However the main focus of the
paper is on exploring the properties of $Q_{G}$ that facilitate the
connectivity analysis and an extensive discussion of the algorithms
to compute $Q_{G}$ is beyond the scope of the paper.

\begin{remrk}
For simplicity, the terms used in our discussion are based on the
problems  in Example \ref{exa:probabilistic connectivity matrix}.
The discussion however can be easily adapted to the analysis of the
problems in Example \ref{fig:quality of connectivity example 2}.
For example, if $a_{ij}$ is defined to be the probability that a
transmission between nodes $v_{i}$ and $v_{j}$ is successful, the
$\left(i,j\right)^{th}$ entry of the probabilistic connectivity matrix
$Q_{G}$ computed using \eqref{eq:Computation of the probabilistic connectivity matrix}
then gives the probability that a transmission from $v_{i}$ to $v_{j}$
via a multi-hop path is successful under the best routing algorithm,
which can always find a shortest and error-free path between from
$v_{i}$ to $v_{j}$ if it exists, or alternatively, the probability
that a packet broadcast from $v_{i}$ can reach $v_{j}$ where each
node receiving the packet only broadcasts the packet once. Therefore
the $\left(i,j\right)^{th}$ entry of $Q_{G}$ can be used as a quality
measure of the end-to-end paths between $v_{i}$ and $v_{j}$, which
takes into account the fact that availability of an extra path between
a pair of nodes can be exploited to improve the probability of successful
transmissions.
\end{remrk}

\section{Some Key Inequalities for Connection Probabilities\label{sec:General-Properties-of probabilistic connectivity matrix}}

The entries of the probabilistic connectivity matrix give an intuitive
idea about the overall quality of end-to-end paths in a network. In this section, we provide some important inequalities  that may facilitate
the analysis of the quality of connectivity. Some of these inequalities are exploited in the next section 
to establish some key properties of the probabilistic connection matrix itself.

We first introduce some concepts and results that are required for
the further analysis of the probabilistic connectivity matrix $Q_{G}$.

For a random graph with a given set of vertices, a particular event
is \emph{increasing} if the event is preserved when more edges are
added into the graph. An event is \emph{decreasing} if its complement
is increasing. 

The following theorems on FKG inequality and BK inequality respectively
are used:
\begin{thm}
\cite[Theorem 1.4]{Meester96Continuum} \label{thm:FKG inequality}
(FKG Inequality) If events $A$ and $B$ are both increasing events
or decreasing events depending on the state of finitely many edges,
then\[
\Pr\left(A\cap B\right)\geq\Pr\left(A\right)\Pr\left(B\right)\]
\end{thm}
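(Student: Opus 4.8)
The plan is to prove the FKG inequality directly by induction on the number $m$ of edges on whose states the events depend, following the classical argument of Harris for product measures. Throughout, I would identify an edge configuration with a point of $\{0,1\}^m$ under the product measure in which edge $i$ is present with its own probability $p_i$, and write $f=\mathbf{1}_A$, $g=\mathbf{1}_B$. Increasing events correspond to $\{0,1\}$-valued functions that are nondecreasing in the coordinatewise partial order on $\{0,1\}^m$, so the claim becomes $E[fg]\ge E[f]\,E[g]$, i.e.\ $\mathrm{Cov}(f,g)\ge 0$.

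First I would dispose of the base case $m=1$. If $f$ and $g$ are both nondecreasing functions of a single Bernoulli$(p)$ variable $I$, a one-line computation gives $E[fg]-E[f]\,E[g]=p(1-p)\big(f(1)-f(0)\big)\big(g(1)-g(0)\big)\ge 0$, since the two factors have the same sign; the same identity (with the product of differences still nonnegative) covers the case where both functions are nonincreasing.

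Next comes the inductive step. Assuming the inequality for events depending on $m-1$ edges, let $A,B$ depend on edges $1,\dots,m$ and condition on the state $I_m=i$ of the last edge. For each fixed $i\in\{0,1\}$, the conditioned events are increasing events on the remaining $m-1$ edges, so the induction hypothesis yields $\Pr(A\cap B\mid I_m=i)\ge \Pr(A\mid I_m=i)\,\Pr(B\mid I_m=i)$. Writing $\phi(i)=\Pr(A\mid I_m=i)$ and $\psi(i)=\Pr(B\mid I_m=i)$, monotonicity of $A$ and $B$ forces $\phi$ and $\psi$ to be nondecreasing in $i$; taking expectations over $I_m$ and then applying the $m=1$ case to $\phi,\psi$ gives
\[
\Pr(A\cap B)=E\big[\Pr(A\cap B\mid I_m)\big]\ge E\big[\phi(I_m)\psi(I_m)\big]\ge E[\phi(I_m)]\,E[\psi(I_m)]=\Pr(A)\,\Pr(B).
\]
Finally, the decreasing case reduces to the increasing one: if $A,B$ are both decreasing then $A^c,B^c$ are both increasing, so $\Pr(A^c\cap B^c)\ge\Pr(A^c)\Pr(B^c)$, and expanding $\Pr(A\cap B)=1-\Pr(A^c)-\Pr(B^c)+\Pr(A^c\cap B^c)$ yields $\Pr(A\cap B)\ge\big(1-\Pr(A^c)\big)\big(1-\Pr(B^c)\big)=\Pr(A)\,\Pr(B)$. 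I do not expect a genuine obstacle; the only point needing care is the observation that conditioning on one coordinate preserves the monotonicity of the events and turns the conditional probabilities into monotone functions of that coordinate, which is exactly what lets the induction close.
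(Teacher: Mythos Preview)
Your argument is correct and is exactly the classical Harris induction proof of the FKG inequality for independent Bernoulli edges. The only minor point worth making explicit is that in the inductive step you apply the $m=1$ case to the functions $\phi,\psi$, which are $[0,1]$-valued rather than $\{0,1\}$-valued; this is fine because your covariance identity $E[fg]-E[f]E[g]=p(1-p)\big(f(1)-f(0)\big)\big(g(1)-g(0)\big)$ holds for arbitrary real-valued $f,g$ on $\{0,1\}$, but it is worth saying so since you introduced $f,g$ as indicator functions.

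As for comparison with the paper: the paper does not prove this theorem at all. It is quoted as a known result from \cite{Meester96Continuum} and is used as a black box in the proofs of Lemmas~\ref{lem:an inequality on entries of Q - lower bound} and~\ref{lem:an inequality on entries of Q - upper bound}. So there is nothing to compare; you have supplied a self-contained proof where the paper simply cites the literature.
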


\begin{thm}
\cite[Theorem 1.5]{Berg85Inequalities,Meester96Continuum} \label{thm:BK inequality for increasing events}
(BK Inequality) If events $A$ and $B$ are both increasing events
depending on the state of finitely many edges, then\[
\Pr\left(A\Box B\right)\leq\Pr\left(A\right)\Pr\left(B\right)\]
where for two events $A$ and $B$, $A\Box B$ denotes the event that
there exist two \emph{disjoint} sets of edges such that the first
set of edges guarantees the occurrence of $A$ and the second set
of edges guarantees the occurrence of $B$.
\end{thm}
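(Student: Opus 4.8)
\textbf{Proof proposal for Theorem~\ref{thm:BK inequality for increasing events}.} This is the van den Berg--Kesten inequality, so within the paper it suffices to cite \cite[Theorem~1.5]{Berg85Inequalities,Meester96Continuum}; here I sketch how I would prove it from scratch. Since $A$ and $B$ depend on finitely many edges, I work on the finite product space $\Omega=\{0,1\}^{E_0}$ with independent coordinates, $E_0$ being the relevant edge set; a \emph{witness} for an event is a set $K\subseteq E_0$ such that every configuration agreeing with $\omega$ on $K$ lies in the event, and $A\Box B$ is the set of configurations admitting disjoint witnesses, one for $A$ and one for $B$. Because $A$ and $B$ are increasing, one may always take a witness to consist of open edges of $\omega$, so disjointness of witnesses is a statement about which open edges are used; this is the property that makes the induction below work.

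The plan is induction on the number $N$ of edges on which \emph{both} $A$ and $B$ depend. If $N=0$, the two events depend on disjoint edge sets, hence are independent, and any witness of $A$ is automatically disjoint from any witness of $B$, so $A\Box B=A\cap B$ and $\Pr(A\Box B)=\Pr(A)\Pr(B)$ with equality --- the base case. For the inductive step, fix a shared edge $e$ and ``split'' it: pass to the enlarged product space in which $\omega_e$ is replaced by two independent Bernoulli copies $\omega_e^{A}$ and $\omega_e^{B}$ of the same parameter, and let $\hat A$ and $\hat B$ be obtained from $A$ and $B$ by reading $\omega_e^{A}$, respectively $\omega_e^{B}$, in place of $\omega_e$. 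Then $\hat A$ and $\hat B$ are still increasing, depend on only $N-1$ common edges, and satisfy $\Pr(\hat A)=\Pr(A)$, $\Pr(\hat B)=\Pr(B)$ because each copy carries the right marginal law; hence the induction hypothesis gives $\Pr(\hat A\Box\hat B)\le\Pr(\hat A)\Pr(\hat B)=\Pr(A)\Pr(B)$.

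The crux --- and the step I expect to be the main obstacle --- is the monotonicity claim $\Pr(A\Box B)\le\Pr(\hat A\Box\hat B)$: heuristically, once $e$ is split the witness of $A$ and the witness of $B$ can never compete for $e$, so disjoint occurrence only becomes more likely; but turning this into a rigorous monotone coupling between $\Omega$ and the enlarged space --- one that carries the mass of $\{A\Box B\}$ onto that of $\{\hat A\Box\hat B\}$ --- is the genuinely nontrivial combinatorial core of the van den Berg--Kesten argument, and is exactly where the increasing hypothesis is used (``turning on'' the unused copy of $e$ must never destroy a witness). Chaining the two inequalities completes the inductive step, and iterating the split over all shared edges down to the $N=0$ base case finishes the proof. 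I would close by remarking that the instances needed later in this paper --- with $A$ and $B$ the (increasing) connectivity events ``$v_i$ is joined to $v_j$ by a path'', for which $A\Box B$ asserts the existence of edge-disjoint such paths --- are covered directly.
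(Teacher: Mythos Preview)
The paper does not prove this theorem: it is stated as a cited result from \cite{Berg85Inequalities,Meester96Continuum} and used as a black box in the proof of Lemma~\ref{lem:an inequality on entries of Q - upper bound}. Your opening remark that ``within the paper it suffices to cite'' is therefore exactly right, and your sketch goes well beyond what the paper itself does.

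As for the sketch, it is the standard van den Berg--Kesten edge-splitting induction and is correctly outlined: the base case (events depending on disjoint edge sets) is handled properly, the reduction of the shared-edge count via an independent copy of a common edge is the right mechanism, and you correctly isolate the monotonicity step $\Pr(A\Box B)\le\Pr(\hat A\Box\hat B)$ as the nontrivial combinatorial core where the increasing hypothesis is genuinely used. Since the paper offers no proof of its own, there is nothing to compare against; your proposal is sound and, if anything, more than is needed here.
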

There is a recent extension of Theorem \ref{thm:BK inequality for increasing events}
to two arbitrary events, i.e. events $A$ and $B$ in Theorem \ref{thm:BK inequality for increasing events}
do not have to be increasing events \cite{Borgs98The}.

Denote by $\xi_{ij}$ the event that there is a path between vertices
$v_{i}$ and $v_{j}$, $i\neq j$. Denote by $\xi_{ikj}$ the event
that there is a path between vertices $v_{i}$ and $v_{j}$ \emph{and}
that path passes through the third vertex $v_{k}$, where $k\in\Gamma_{n}\backslash\left\{ i,j\right\} $
and $\Gamma_{n}$ is the set of indices of all vertices. Denote by
$\eta_{ij}$ the event that there is an edge between vertices
$v_{i}$ and $v_{j}$. Denote by $\pi_{ikj}$ the event that there is a path
between vertices $v_{i}$ and $v_{k}$ \emph{and} there is a path
between vertices $v_{k}$ and $v_{j}$, where $k\in\Gamma_{n}\backslash\left\{ i,j\right\} $.
Obviously \begin{equation}
\pi_{ikj}\Rightarrow\xi_{ij}\label{eq:relation between joint paths and individual paths}\end{equation}
It can also be shown from the above definitions that
\begin{equation}
\xi_{ij}=\eta_{ij}\cup\left(\cup_{k\neq i,j}\xi_{ikj}\right)\label{eq:equation on the event of a path i and j}\end{equation}

Let $q_{ij}$, $i\neq j$, be the $\left(i,j\right)^{th}$ entry of
$Q_{G}$, i.e. $q_{ij}=\Pr\left(\xi_{ij}\right)$. The following lemma
can be readily obtained from the FKG inequality and the above definitions
\begin{lemma}
\label{lem:an inequality on entries of Q - lower bound}For two distinct
indices $i,j\in\Gamma_{n}$ and $\forall k\in\Gamma_{n}\backslash\left\{ i,j\right\} $
\begin{equation}
q_{ij}\geq\max_{k \in \Gamma_{n}\backslash\left\{ i,j\right\} }q_{ik}q_{kj}\label{eq:a lower bound on q_ij}\end{equation}
\end{lemma}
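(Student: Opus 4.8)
The plan is to sandwich $q_{ij}$ between $\Pr(\pi_{ikj})$ and the product $q_{ik}q_{kj}$: the implication \eqref{eq:relation between joint paths and individual paths} handles one side, and the FKG inequality (Theorem \ref{thm:FKG inequality}) handles the other; taking the maximum over $k$ then finishes the proof.

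First I would fix two distinct indices $i,j\in\Gamma_{n}$ and an arbitrary $k\in\Gamma_{n}\backslash\left\{ i,j\right\}$. By the definition of $\pi_{ikj}$ we have $\pi_{ikj}=\xi_{ik}\cap\xi_{kj}$, and by \eqref{eq:relation between joint paths and individual paths}, $\pi_{ikj}\Rightarrow\xi_{ij}$. Monotonicity of probability therefore yields $q_{ij}=\Pr\left(\xi_{ij}\right)\geq\Pr\left(\pi_{ikj}\right)=\Pr\left(\xi_{ik}\cap\xi_{kj}\right)$.

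Next I would observe that both $\xi_{ik}$ and $\xi_{kj}$ are increasing events depending on the state of the finitely many edges $e_{1},\ldots,e_{m}$ in $E$: adding an edge to a graph cannot destroy an already existing path, so the property ``there is a path between two prescribed vertices'' is preserved under edge addition. Hence the hypotheses of Theorem \ref{thm:FKG inequality} are satisfied, and it gives $\Pr\left(\xi_{ik}\cap\xi_{kj}\right)\geq\Pr\left(\xi_{ik}\right)\Pr\left(\xi_{kj}\right)=q_{ik}q_{kj}$. Chaining this with the previous inequality gives $q_{ij}\geq q_{ik}q_{kj}$ for every admissible $k$, and since the left-hand side is independent of $k$, taking the maximum over $k\in\Gamma_{n}\backslash\left\{ i,j\right\}$ establishes \eqref{eq:a lower bound on q_ij}.

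The only step that warrants any care --- and it is more a matter of bookkeeping than a genuine obstacle --- is checking that $\xi_{ik}$ and $\xi_{kj}$ really do qualify as increasing events on finitely many edges, so that the FKG inequality is legitimately invoked; once that is in place the argument is just monotonicity of probability together with the definitional implication already recorded in \eqref{eq:relation between joint paths and individual paths}.
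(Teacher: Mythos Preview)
Your proof is correct and follows essentially the same approach as the paper: use the implication $\pi_{ikj}\Rightarrow\xi_{ij}$ for the first inequality, then apply the FKG inequality to the increasing events $\xi_{ik}$ and $\xi_{kj}$ for the second, and finally take the maximum over $k$. If anything, you are slightly more explicit than the paper in justifying why $\xi_{ik}$ and $\xi_{kj}$ are increasing events on finitely many edges.
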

\begin{proof}
It follows readily from the above definitions that the event $\xi_{ij}$
is an increasing event. Due to \eqref{eq:relation between joint paths and individual paths}
and the FKG inequality:
\begin{equation}
\Pr\left(\xi_{ij}\right) \geq \Pr\left(\pi_{ikj}\right)=  \Pr\left(\xi_{ik}\cap\xi_{kj}\right)
\geq \Pr\left(\xi_{ik}\right)\Pr\left(\xi_{kj}\right)\label{eq:an application of FKG}
\end{equation}
The conclusion follows.
\end{proof}
Lemma \ref{lem:an inequality on entries of Q - lower bound} gives
a lower bound of $q_{ij}$. The following lemma gives an upper bound
of $q_{ij}$:
\begin{lemma}
\label{lem:an inequality on entries of Q - upper bound}For two distinct
indices $i,j\in\Gamma_{n}$ and $\forall k\in\Gamma_{n}\backslash\left\{ i,j\right\} $, 
\begin{equation}
q_{ij}\leq1-\left(1-a_{ij}\right)\prod_{k\in\Gamma_{n}\backslash\left\{ i,j\right\} }\left(1-q_{ik}q_{kj}\right)\label{eq:an upper bound on q_ij}\end{equation}
where $a_{ij}=\Pr\left(\eta_{ij}\right)$.\end{lemma}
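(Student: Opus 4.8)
The plan is to use the decomposition \eqref{eq:equation on the event of a path i and j}, namely $\xi_{ij}=\eta_{ij}\cup\left(\cup_{k\neq i,j}\xi_{ikj}\right)$, and bound the probability of this union by viewing it as the complement of an intersection of decreasing events. First I would pass to complements: $\xi_{ij}^{c}=\eta_{ij}^{c}\cap\left(\cap_{k\neq i,j}\xi_{ikj}^{c}\right)$. Each of the events $\eta_{ij}$ and $\xi_{ikj}$ is increasing (adding edges can only create a direct link or a path through $v_k$), so their complements are decreasing events depending on finitely many edges, and the FKG inequality (Theorem \ref{thm:FKG inequality}, applied to the decreasing case) gives
\begin{equation}
\Pr\left(\xi_{ij}^{c}\right)\geq\Pr\left(\eta_{ij}^{c}\right)\prod_{k\in\Gamma_{n}\backslash\left\{ i,j\right\} }\Pr\left(\xi_{ikj}^{c}\right)=\left(1-a_{ij}\right)\prod_{k\in\Gamma_{n}\backslash\left\{ i,j\right\} }\left(1-\Pr\left(\xi_{ikj}\right)\right).\nonumber
\end{equation}
Taking complements again flips the inequality to $q_{ij}\leq 1-\left(1-a_{ij}\right)\prod_{k}\left(1-\Pr\left(\xi_{ikj}\right)\right)$.

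It then remains to show $\Pr\left(\xi_{ikj}\right)\geq q_{ik}q_{kj}$, which combined with the display above (noting $x\mapsto 1-x$ is decreasing and the product of the $(1-\cdot)$ factors is thereby decreased) yields \eqref{eq:an upper bound on q_ij}. For this I would argue that the event $\pi_{ikj}$ — there is a path from $v_i$ to $v_k$ and a path from $v_k$ to $v_j$ — implies $\xi_{ikj}$: concatenating the two paths at $v_k$ produces a walk from $v_i$ to $v_j$ through $v_k$, and any walk contains a path, which still passes through $v_k$ in the sense required (one can extract a path from $v_i$ to $v_j$ that uses $v_k$, or more carefully, the existence of both sub-paths certifies a connection between $v_i$ and $v_j$ realized through $v_k$). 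Hence $\Pr\left(\xi_{ikj}\right)\geq\Pr\left(\pi_{ikj}\right)=\Pr\left(\xi_{ik}\cap\xi_{kj}\right)\geq\Pr\left(\xi_{ik}\right)\Pr\left(\xi_{kj}\right)=q_{ik}q_{kj}$, where the last inequality is again FKG applied to the increasing events $\xi_{ik}$ and $\xi_{kj}$ (exactly as in \eqref{eq:an application of FKG}).

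The main obstacle I anticipate is the careful handling of the step $\pi_{ikj}\Rightarrow\xi_{ikj}$, i.e. extracting from two paths (which may share edges or vertices other than $v_k$) a single genuine path between $v_i$ and $v_j$ passing through $v_k$; if the two paths intersect before reaching $v_k$ one must trim them appropriately, but since $\xi_{ikj}$ only requires \emph{some} path through $v_k$ this can always be done, and in any case one can sidestep it by defining $\xi_{ikj}$ via walks or by directly bounding $\Pr(\xi_{ikj})$ from below by $\Pr(\pi_{ikj})$ as an obvious containment. A secondary point to get right is that FKG requires the events to depend on finitely many edges and to be simultaneously decreasing — both hold here since $G$ has finitely many possible edges $e_1,\dots,e_m$ and each $\xi_{ikj}^c$, $\eta_{ij}^c$ is decreasing — and that the $(1-q_{ik}q_{kj})$ factors in the final bound dominate the $(1-\Pr(\xi_{ikj}))$ factors because $\Pr(\xi_{ikj})\ge q_{ik}q_{kj}$, which preserves the direction of the inequality after the final complementation.
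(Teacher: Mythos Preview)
Your decomposition via \eqref{eq:equation on the event of a path i and j} and the application of FKG to the decreasing events $\eta_{ij}^{c},\xi_{ikj}^{c}$ are correct and match the paper's route up to the intermediate bound
\[
q_{ij}\le 1-(1-a_{ij})\prod_{k\in\Gamma_n\setminus\{i,j\}}\bigl(1-\Pr(\xi_{ikj})\bigr).
\]
The proof breaks at the next step. To pass from this to \eqref{eq:an upper bound on q_ij} you need $\prod_k\bigl(1-\Pr(\xi_{ikj})\bigr)\ge\prod_k\bigl(1-q_{ik}q_{kj}\bigr)$, i.e.\ $\Pr(\xi_{ikj})\le q_{ik}q_{kj}$, \emph{not} $\Pr(\xi_{ikj})\ge q_{ik}q_{kj}$. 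With your claimed inequality the replacement makes the product larger, hence the right-hand side $1-(1-a_{ij})\prod(\cdot)$ becomes \emph{smaller}, and you cannot chain the two $\le$'s.

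Moreover, the implication $\pi_{ikj}\Rightarrow\xi_{ikj}$ you rely on is false in general, so the inequality $\Pr(\xi_{ikj})\ge q_{ik}q_{kj}$ is itself not valid. Take four vertices $i,k,j,m$ with only the edges $e_{im},e_{mk},e_{mj}$ present (each with probability $1$). Then $q_{ik}=q_{kj}=1$, yet the unique $i$--$j$ path is $i\,m\,j$, which avoids $k$; the concatenation $i\,m\,k\,m\,j$ repeats the edge $e_{mk}$ and is not a path. Hence $\Pr(\xi_{ikj})=0<1=q_{ik}q_{kj}$.

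The paper obtains the needed direction by identifying $\xi_{ikj}$ with the \emph{disjoint occurrence} $\xi_{ik}\Box\xi_{kj}$ (a path through $v_k$ splits into edge-disjoint $i$--$k$ and $k$--$j$ paths, and conversely), and then invoking the BK inequality (Theorem~\ref{thm:BK inequality for increasing events}) to get $\Pr(\xi_{ikj})=\Pr(\xi_{ik}\Box\xi_{kj})\le\Pr(\xi_{ik})\Pr(\xi_{kj})=q_{ik}q_{kj}$. That is precisely the missing ingredient; FKG alone cannot supply this upper bound.
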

\begin{proof}
We will first show that $\xi_{ikj}\Leftrightarrow\xi_{ik}\Box\xi_{kj}$.
That is, the occurrence of the event $\xi_{ikj}$ is a sufficient
and necessary condition for the occurrence of the event $\xi_{ik}\Box\xi_{kj}$.

Using the definition of $\xi_{ikj}$, occurrence of $\xi_{ikj}$ means
that there is a path between vertices $v_{i}$ and $v_{j}$ \emph{and}
that path passes through vertex $v_{k}$. It follows that there exist
a path between vertex $i$ and vertex $v_{k}$ \emph{and }a path between
vertex $v_{k}$ and vertex $v_{j}$ \emph{and} the two paths do not
have edge(s) in common. Otherwise, 
it will contradict the definition of $\xi_{ikj}$,
particularly as the definition of a path requires the edges to
be distinct. Therefore $\xi_{ikj}\Rightarrow\xi_{ik}\Box\xi_{kj}$. Likewise, 
$\xi_{ikj}\Leftarrow\xi_{ik}\Box\xi_{kj}$ also follows directly from
the definitions of $\xi_{ikj}$, $\xi_{ik}$, $\xi_{kj}$ and $\xi_{ik}\Box\xi_{kj}$.
Consequently 
\begin{equation}
\Pr\left(\xi_{ikj}\right)=\Pr\left(\xi_{ik}\Box\xi_{kj}\right)\leq\Pr\left(\xi_{ik}\right)\Pr\left(\xi_{kj}\right)\label{eq:an application of the BK inequality}\end{equation}
where the inequality is a direct result of the BK inequality.

With a little bit abuse of the terminology, in the following derivations
we also use $\xi_{ikj}$ to represent the set of edges that make the
event $\xi_{ikj}$ happen, and use $\eta_{ij}$ to denote the edge
between vertices $v_{i}$ and $v_{j}$.

Note that the set of edges $\cup_{k\in\Gamma_{n}\backslash\left\{ i,j\right\} }\xi_{ikj}$
does not contain $\eta_{ij}$. Therefore using \eqref{eq:equation on the event of a path i and j}
and independence of edges (used in the third step)
\begin{eqnarray}
q_{ij} & = & \Pr\left(\eta_{ij}\cup\left(\cup_{k\in\Gamma_{n}\backslash\left\{ i,j\right\} }\xi_{ikj}\right)\right)\nonumber \\
 & = & 1-\Pr\left(\overline{\eta_{ij}}\cap\left(\overline{\cup_{k\in\Gamma_{n}\backslash\left\{ i,j\right\} }\xi_{ikj}}\right)\right)\nonumber \\
 & = & 1-\left(1-a_{ij}\right)\Pr\left(\cap_{k\in\Gamma_{n}\backslash\left\{ i,j\right\} }\overline{\xi_{ikj}}\right)\nonumber \\
 & \leq & 1-\left(1-a_{ij}\right)\prod_{k\in\Gamma_{n}\backslash\left\{ i,j\right\} }\Pr\left(\overline{\xi_{ikj}}\right)\label{eq:use of FKG inequality}\\
 & = & 1-\left(1-a_{ij}\right)\prod_{k\in\Gamma_{n}\backslash\left\{ i,j\right\} }\left(1-\Pr\left(\xi_{ikj}\right)\right)\nonumber \\
 & \leq & 1-\left(1-a_{ij}\right)\prod_{k\in\Gamma_{n}\backslash\left\{ i,j\right\} }\left(1-q_{ik}q_{kj}\right)\label{eq:use of BK inequality}\end{eqnarray}
where in \eqref{eq:use of FKG inequality}, FKG inequality and the obvious fact that $\overline{\xi_{ikj}}$ is a decreasing event are used and the last step results due to
\eqref{eq:an application of the BK inequality}.
\end{proof}
When there is no edge between vertices $v_{i}$ and $v_{j}$, which
is the generic case, the upper and lower bounds in Lemmas \ref{lem:an inequality on entries of Q - lower bound}
and \ref{lem:an inequality on entries of Q - upper bound} reduce
to\begin{equation}
\max_{k\in\Gamma_{n}\backslash\left\{ i,j\right\} }q_{ik}q_{kj}\leq q_{ij}\leq1-\prod_{k\in\Gamma_{n}\backslash\left\{ i,j\right\} }\left(1-q_{ik}q_{kj}\right)\label{eq:a combined upper and lower bound on q_ij}\end{equation}

The above inequality sheds insight on how the quality of paths between
a pair of vertices is related to the quality of paths between other
pairs of vertices. It can be possibly used to determine the most effective
way of improving the quality of a particular set of paths by improving
the quality of a particular (set of) edge(s), or equivalently what can be
reasonably expected from an improvement of a particular edge
on the quality of end-to-end paths.

The following lemma further shows that relation among entries of the
path matrix $Q_{G}$ can be further used to derive some topological
information of the graph.
\begin{lemma}
\label{lem:critical node}If $q_{ij}=q_{ik}q_{kj}$ for three distinct
vertices $v_{i}$, $v_{j}$ and $v_{k}$, the vertex set $V$ of the
underlying graph $G\left(V,E\right)$ can be divided into three non-empty
and non-intersecting sub-sets $V_{1}$, $V_{2}$ and $V_{3}$ such
that $v_{i}\in V_{1}$, $v_{j}\in V_{3}$ and $V_{2}=\left\{ v_{k}\right\} $
\emph{and} any possible path between a vertex in $V_{1}$ and a vertex
in $V_{2}$ must pass through $v_{k}$. Further, for any pair of vertices
$v_{l}$ and $v_{m}$, where $v_{l}\in V_{1}$ and $v_{m}\in V_{3}$,
$q_{lm}=q_{lk}q_{km}$.\end{lemma}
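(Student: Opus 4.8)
The plan is to use the hypothesis $q_{ij}=q_{ik}q_{kj}$ to force equality all the way through the chain \eqref{eq:an application of FKG}; in particular it forces $\Pr\left(\xi_{ij}\right)=\Pr\left(\pi_{ikj}\right)$. Since $\pi_{ikj}\Rightarrow\xi_{ij}$ by \eqref{eq:relation between joint paths and individual paths}, and since in our model every potential link is present with probability strictly between $0$ and $1$ so that every realization of the random edge set has strictly positive probability, this identity of probabilities can be upgraded to an identity of \emph{events}: $\xi_{ij}=\pi_{ikj}$. In words, in \emph{every} realization of the graph, $v_i$ and $v_j$ lie in the same connected component if and only if $v_i$, $v_k$ and $v_j$ all lie in one component.

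The first step is to deduce from this that $v_k$ is a cut vertex of $G$ separating $v_i$ from $v_j$. Let $V_1$ be the vertex set of the connected component of $v_i$ in the graph $G\setminus\{v_k\}$ obtained from $G$ by deleting $v_k$ together with its incident edges. If $v_j$ lay in $V_1$, there would be a path in $G$ from $v_i$ to $v_j$ avoiding $v_k$; the realization in which precisely the edges of that path are present puts $v_i$ and $v_j$ in the same component while leaving $v_k$ isolated, and it has positive probability, contradicting $\xi_{ij}=\pi_{ikj}$. Hence $v_j\notin V_1$. Setting $V_2=\{v_k\}$ and $V_3=V\setminus\left(V_1\cup\{v_k\}\right)$ then yields the required partition of $V$ into non-empty, pairwise disjoint sets with $v_i\in V_1$, $v_j\in V_3$; and, by the very definition of $V_1$ as a connected component of $G\setminus\{v_k\}$, any path in $G$ between a vertex of $V_1$ and a vertex of $V_3$ must pass through $v_k$.

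For the final assertion, fix $v_l\in V_1$ and $v_m\in V_3$. Since every path in $G$ between $v_l$ and $v_m$ passes through $v_k$, in any realization a path joining $v_l$ and $v_m$ necessarily connects each of them to $v_k$; therefore $\xi_{lm}=\pi_{lkm}=\xi_{lk}\cap\xi_{km}$ as events, and $q_{lm}=\Pr\left(\xi_{lk}\cap\xi_{km}\right)$. It remains to show $\xi_{lk}$ and $\xi_{km}$ are independent. This rests on a locality observation: every path in $G$ from $v_l$ to $v_k$ lies entirely within $V_1\cup\{v_k\}$ (the portion preceding $v_k$ stays inside the component $V_1$), so the event $\xi_{lk}$ is determined solely by the states of the edges of $G$ with both endpoints in $V_1\cup\{v_k\}$; symmetrically, $\xi_{km}$ is determined by the edges with both endpoints in $V_3\cup\{v_k\}$. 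These two sets of edges are disjoint, because a common edge would need both of its distinct endpoints in $\left(V_1\cup\{v_k\}\right)\cap\left(V_3\cup\{v_k\}\right)=\{v_k\}$. As distinct edges are mutually independent, $\xi_{lk}$ and $\xi_{km}$ are independent, whence $q_{lm}=\Pr\left(\xi_{lk}\right)\Pr\left(\xi_{km}\right)=q_{lk}q_{km}$.

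The step I expect to be the main obstacle is the first one: passing from the numerical equality $\Pr\left(\xi_{ij}\right)=\Pr\left(\pi_{ikj}\right)$ to the structural statement $\xi_{ij}=\pi_{ikj}$, which relies on the fact that under the unreliable-link model no non-empty edge-event can be a null event. Everything afterward --- identifying the separating vertex $v_k$ and propagating the equality to all cross pairs $(v_l,v_m)$ --- is then elementary graph theory combined with independence of disjoint sets of edges. (Incidentally, the same positivity argument shows the hypothesis can hold only if $v_i$ and $v_j$ are not directly connected, i.e. $a_{ij}=0$.)
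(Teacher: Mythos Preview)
Your proof is correct and follows essentially the same route as the paper: force equality in the FKG chain \eqref{eq:an application of FKG} to obtain $\Pr(\xi_{ij}\setminus\pi_{ikj})=0$, upgrade this to the structural statement that $v_k$ separates $v_i$ from $v_j$, and then use that $\xi_{lk}$ and $\xi_{km}$ depend on disjoint edge sets to get independence and hence $q_{lm}=q_{lk}q_{km}$. Your explicit construction of $V_1$ as the component of $v_i$ in $G\setminus\{v_k\}$ and your careful locality/disjointness argument for independence are cleaner than the paper's somewhat informal ``cannot have any edge in common'' phrasing; you are also right to flag the positivity assumption $0<a_{ij}<1$ as the place where the upgrade from probability equality to event equality lives---the paper uses the same step (``a path \ldots\ bypassing $v_k$ exists with a non-zero probability'') without making the hypothesis explicit, and indeed without it the conclusion can fail (e.g.\ a triangle with all $a_{ij}=1$).
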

\begin{proof}
Using \eqref{eq:an application of FKG} in the second step, it follows  that\begin{eqnarray*}
q_{ij} & = & \Pr\left(\left(\xi_{ij}\backslash\pi_{ikj}\right)\cup\pi_{ikj}\right)\\
 & = & \Pr\left(\xi_{ij}\backslash\pi_{ikj}\right)+\Pr\left(\pi_{ikj}\right)\\
 & \geq & \Pr\left(\xi_{ij}\backslash\xi_{ikj}\right)+q_{ik}q_{kj}\end{eqnarray*}
Therefore $q_{ij}=q_{ik}q_{kj}$ implies that $\Pr\left(\xi_{ij}\backslash\pi_{ikj}\right)=0$
or equivalently $\xi_{ij}\Leftrightarrow\pi_{ikj}$

Further, $\Pr\left(\xi_{ij}\backslash\pi_{ikj}\right)=0$
implies that a \emph{possible} path (i.e. a path with a non-zero probability)
connecting $v_{i}$ and $v_{k}$ and a \emph{possible} path connecting
$v_{k}$ and $v_{j}$ cannot have any edge in common. Otherwise a
path from $v_{i}$ to $v_{j}$, bypassing $v_{k}$, exists with a
non-zero probability which implies $\Pr\left(\xi_{ij}\backslash\xi_{ikj}\right)>0$.
The conclusion follows readily that if $q_{ij}=q_{ik}q_{kj}$
for three distinct vertices $v_{i}$, $v_{j}$ and $v_{k}$, the vertex
set $V$ of the underlying graph $G\left(V,E\right)$ can be divided
into three non-empty and non-overlapping sub-sets $V_{1}$, $V_{2}$
and $V_{3}$ such that $v_{i}\in V_{1}$, $v_{j}\in V_{3}$ and $V_{2}=\left\{ v_{k}\right\} $
\emph{and} a path between a vertex in $V_{1}$ and a vertex in $V_{2}$,
if exists, must pass through $v_{k}$.

Further, for any pair of vertices $v_{l}$ and $v_{m}$, where $v_{l}\in V_{1}$
and $v_{m}\in V_{3}$, it is easily shown that $\Pr\left(\xi_{lm}\backslash\pi_{lkm}\right)=0$.
Due to independence of edges and further using the fact that $\Pr\left(\xi_{lm}\backslash\pi_{lkm}\right)=0$,
it can be shown that 
\begin{eqnarray}
\Pr\left(\xi_{lm}\right) =  \Pr\left(\pi_{lkm}\right) & = & \Pr\left(\xi_{lk}\cap\xi_{km}\right)\nonumber \\
 & = & \Pr\left(\xi_{lk}\right)\Pr\left(\xi_{km}\right)\label{eq:eq:implication of the exclusion - intermediate step}
\end{eqnarray}
where \eqref{eq:eq:implication of the exclusion - intermediate step}
results due to the fact that under the condition of $\Pr\left(\xi_{lm}\backslash\pi_{lkm}\right)=0$,
a path between vertices $v_{l}$ and $v_{k}$ and a path between vertices
$v_{k}$ and $v_{m}$ cannot possibly have any edge in common. 
\end{proof}
An implication of Lemma \ref{lem:critical node} is that for any three
distinct vertices, $v_{i}$, $v_{j}$ and $v_{k}$, if a relationship
$q_{ij}=q_{ik}q_{kj}$ holds, vertex $v_{k}$ must be a \emph{critical}
vertex whose removal will render the graph disconnected.

\section{Properties of the Connectivity Matrix} \la{sec:qprop}

Having established some inequalities obeyed by the entries
of the probabilistic connectivity matrix $Q_{G}$, we now turn to establishing a measure of the quality of network connectivity.
At the core of the development in this section is the following result.

\begin{lemma}
\label{lem:multiaffine function}Each off-diagonal entry of the probabilistic
connectivity matrix $Q_{G}$ is a multiaffine\footnote{A multiaffine function is affine in each variable when the other variables are fixed.} function of $a_{ij}$,
$i\in\left\{ 1,\ldots,n\right\} ,j>i$.\end{lemma}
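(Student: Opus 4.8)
The plan is to exploit the law of total probability over the state of a single fixed edge, say $e_{pq}$ (equivalently, over the indicator $I_{pq}$), and to show that each off-diagonal entry $q_{ij}$ is an affine function of $a_{pq} = \Pr(I_{pq}=1)$ when all other edge probabilities are held fixed. First I would recall from the construction in Section~\ref{sec:probabilistic connectivity matrix} that $q_{ij} = \Pr(\xi_{ij}) = E\left(\left.Q_G\right|(I_1,\ldots,I_m)\right)_{ij}$, and that conditioned on the full edge configuration the $(i,j)$ entry is the $0/1$ indicator of the event ``$v_i$ and $v_j$ lie in the same component.'' The key observation is that the event $\xi_{ij}$ depends on the $m$ edges, and each edge is present independently; so I would condition on $I_{pq}$ alone:
\begin{equation}
q_{ij} = \Pr(\xi_{ij}\mid I_{pq}=1)\,a_{pq} + \Pr(\xi_{ij}\mid I_{pq}=0)\,(1-a_{pq}).
\end{equation}
Writing $\alpha = \Pr(\xi_{ij}\mid I_{pq}=1)$ and $\beta = \Pr(\xi_{ij}\mid I_{pq}=0)$, this is $q_{ij} = \beta + (\alpha-\beta)a_{pq}$, which is affine in $a_{pq}$.

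The substance of the argument is then to verify that $\alpha$ and $\beta$ do \emph{not} depend on $a_{pq}$, and that, when we in addition fix all $a_{rs}$ with $(r,s)\neq(p,q)$, the coefficients $\beta$ and $\alpha-\beta$ are constants. For this I would argue that, because the indicator $I_{pq}$ is independent of the collection $\{I_{rs} : (r,s)\neq(p,q)\}$, conditioning on the event $\{I_{pq}=1\}$ (resp.\ $\{I_{pq}=0\}$) simply replaces the random graph $G$ by the random graph $G'$ (resp.\ $G''$) on the same vertex set in which edge $e_{pq}$ is deterministically present (resp.\ deterministically absent) while every other edge $e_{rs}$ is still present independently with probability $a_{rs}$. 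Thus $\alpha = \Pr(\xi_{ij}\text{ in }G')$ and $\beta = \Pr(\xi_{ij}\text{ in }G'')$, and each of these is a function of the remaining probabilities $\{a_{rs}\}_{(r,s)\neq(p,q)}$ only --- it contains no occurrence of $a_{pq}$ at all. Holding those remaining probabilities fixed makes $\alpha$ and $\beta$, hence $\beta$ and $\alpha-\beta$, honest constants, so $q_{ij}$ is affine in $a_{pq}$. Since $(p,q)$ was an arbitrary index pair with $p<q$, repeating the argument for each pair shows $q_{ij}$ is affine in each $a_{pq}$ separately, i.e.\ multiaffine. The diagonal entries are the constant $1$ and need no discussion.

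The main obstacle I anticipate is making the independence/decoupling step fully rigorous at the level of the expectation formula \eqref{eq:Computation of the probabilistic connectivity matrix}: one must be careful that ``conditioning on $I_{pq}$'' genuinely factors the expectation over edge configurations as a two-term average with the inner expectations taken over the \emph{other} $m-1$ edges with unchanged marginals. This is exactly where the standing independence assumption on distinct edges is used. A clean way to present it is to expand the expectation in \eqref{eq:Computation of the probabilistic connectivity matrix} as a sum over the $2^m$ configurations, group the $2^{m-1}$ configurations with $I_{pq}=1$ and the $2^{m-1}$ with $I_{pq}=0$, factor $a_{pq}$ (resp.\ $1-a_{pq}$) out of each group, and observe that what remains in each group is a sum of products of the \emph{other} $a_{rs}$'s weighted by $0/1$ connectivity indicators --- manifestly free of $a_{pq}$. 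That algebraic rewriting simultaneously proves affineness in $a_{pq}$ and exhibits that the two coefficients depend only polynomially (indeed multiaffinely) on the remaining variables, which is slightly more than required but costs nothing.
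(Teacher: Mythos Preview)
Your argument is correct. The conditioning/law-of-total-probability route you take is valid: by independence of the edge indicators, the two conditional probabilities $\alpha=\Pr(\xi_{ij}\mid I_{pq}=1)$ and $\beta=\Pr(\xi_{ij}\mid I_{pq}=0)$ are functions of the remaining $a_{rs}$ only, so $q_{ij}=\beta+(\alpha-\beta)a_{pq}$ is affine in $a_{pq}$, and since $(p,q)$ was arbitrary this yields multiaffineness.

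The paper's proof is a one-line structural observation rather than a conditioning argument: it notes that the event $\xi_{ij}$ is a (finite) union of intersections of the independent events $\eta_{rs}$, so by inclusion--exclusion its probability is a signed sum of products of distinct $a_{rs}$'s, hence multiaffine. Your approach and the paper's are closely related but not identical in presentation. The paper's version is terser and leans on the reader to see why ``union of intersections of independent events'' automatically gives a multilinear polynomial in the $a_{rs}$; your conditioning argument spells out the affine dependence variable-by-variable and is more self-contained, at the cost of a little extra bookkeeping. Either buys the same conclusion with the same single hypothesis (edge independence).
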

\begin{proof}
Observe that $a_{ij}=\Pr\left(\eta_{ij}\right)$ and the events $\eta_{ij}$,
$i\in\left\{ 1,\ldots,n\right\} ,j>i$ are independent. The  conclusion
in the lemma follows readily from the fact that the event associated
with each $q_{ij}$, i.e. there exists a path between vertices $v_{i}$
and $v_{j}$, is a union of intersections of these events $\eta_{ij}$,
$i\in\left\{ 1,\ldots,n\right\} ,j>i$. 
\end{proof}

Not only does the multiaffine structure facilitate the proof of the main result below, we comment later in Remark \ref{remark on optimization}, on how it is potentially useful for performing some of the optimization tasks inherent in maximizing connectivity.

A very desirable property of $Q_{G}$ is established below.
\begin{thm}
\label{thm:positive semi-definite matrix}The probabilistic connectivity
matrix $Q_{G}$, defined for the vertex set $V=\{v_1,\cdots, v_n\}$,  is a positive semi-definite matrix. Further, $Q_{G}$
is positive semi-definite but not positive definite iff
there exist distinct $i,j\in\{1,\cdots,n\}$, such that $q_{ij}=1$.\end{thm}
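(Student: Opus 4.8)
The plan is to exploit the conditioning representation \eqref{eq:Computation of the probabilistic connectivity matrix}. Conditioned on a particular realization $(I_1,\ldots,I_m)$ of the edge indicators, the graph is deterministic and its conditional connectivity matrix $\left.Q_G\right|(I_1,\ldots,I_m)$ has $(i,j)$ entry equal to $1$ if $v_i,v_j$ lie in the same connected component and $0$ otherwise (with $1$ on the diagonal). Such a matrix is a \emph{block matrix} which, after a common permutation of rows and columns grouping vertices by component, is block-diagonal with each block an all-ones matrix $J_{n_\ell}$ of size equal to the component size $n_\ell$. Each all-ones block $J_{n_\ell}$ is positive semi-definite (its eigenvalues are $n_\ell$ once and $0$ with multiplicity $n_\ell-1$), hence $\left.Q_G\right|(I_1,\ldots,I_m)$ is positive semi-definite for every realization. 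Since $Q_G = E\!\left(\left.Q_G\right|(I_1,\ldots,I_m)\right)$ is a convex combination (with the realization probabilities as weights) of positive semi-definite matrices, and the cone of positive semi-definite matrices is closed under convex combinations, $Q_G$ is positive semi-definite. This settles the first assertion.

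For the characterization of definiteness, first the easy direction: if $q_{ij}=1$ for some distinct $i,j$, consider the vector $x = e_i - e_j$. Then $x^\top Q_G x = q_{ii} + q_{jj} - 2q_{ij} = 1 + 1 - 2 = 0$ with $x \neq 0$, so $Q_G$ is singular, hence not positive definite. For the converse, suppose $Q_G$ is singular; I must produce distinct $i,j$ with $q_{ij}=1$. Pick $x \neq 0$ with $x^\top Q_G x = 0$. Using $Q_G = \sum_k p_k \left.Q_G\right|_k$ with all $p_k>0$ (summing over realizations of positive probability) and each $\left.Q_G\right|_k$ positive semi-definite, $x^\top Q_G x = 0$ forces $x^\top \left(\left.Q_G\right|_k\right) x = 0$ for \emph{every} realization $k$ with $p_k>0$. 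In particular, for the realization $I_1=\cdots=I_m=0$ (all edges absent), which has positive probability under our standing assumption that each edge appears with non-negligible probability, $\left.Q_G\right|_k = I_n$, so $x^\top x = 0$, contradicting $x\neq 0$.

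The subtlety — and the main obstacle — is precisely this last point: the argument as just sketched would show $Q_G$ is \emph{always} positive definite, which contradicts the theorem. The resolution is that the realization probabilities may genuinely be such that not all $2^m$ realizations have positive probability (edge probabilities can be $0$ or $1$): an entry $a_{ij}=1$ means edge $e_{ij}$ is present with probability one, so realizations with $I_{ij}=0$ get weight zero. Thus I must work only with the support of the distribution, and the "all edges absent" realization need not lie in it. So the correct converse argument is: $x^\top \left(\left.Q_G\right|_k\right)x = 0$ for all $k$ in the support means $x$ is orthogonal (in the $J_{n_\ell}$-seminorm sense) to every component's all-ones block, i.e. for each $k$ in the support and each component $C$ of that realization, $\sum_{v_\ell \in C} x_\ell = 0$. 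If every $q_{ij}<1$ then for each pair $i,j$ there is a realization in the support where $v_i,v_j$ lie in \emph{different} components; a combinatorial argument on this family of partitions should then force $x=0$, giving the contrapositive. I expect the cleanest route is actually to first prove the structural Lemma~\ref{lem:critical node}-style fact or to argue directly: $q_{ij}=1$ for some $i\neq j$ iff in \emph{every} realization of the support, $v_i$ and $v_j$ are in the same component, iff the vector $e_i-e_j$ lies in the kernel of every $\left.Q_G\right|_k$ in the support, iff $e_i - e_j \in \ker Q_G$; and conversely any kernel vector, decomposed along the finest common coarsening of the support partitions, must be supported on a nontrivial "always-connected" block, yielding such a pair. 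Nailing down that the kernel of $Q_G$ is spanned by differences $e_i - e_j$ over always-connected pairs is the step requiring genuine care.
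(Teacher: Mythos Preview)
Your first part is correct and, in fact, cleaner than the paper's argument. The paper proves the positive semi-definiteness by first establishing that $Q_G$ is multiaffine in the edge probabilities $a_{ij}$, then invoking a convexity result to reduce to the corners $a_{ij}\in\{0,1\}$, and finally showing (your same observation) that at each corner the matrix is a permuted block-diagonal of all-ones blocks. Your route via $Q_G=E\bigl(\left.Q_G\right|(I_1,\ldots,I_m)\bigr)$ bypasses the multiaffine machinery entirely: the expectation already \emph{is} the relevant convex combination of corner matrices, weighted by realization probabilities. The two arguments are really the same convex combination viewed from two sides, but yours is more direct.

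For the second part, your easy direction via $x=e_i-e_j$ is fine (the paper instead shows rows $i$ and $j$ of $Q_G$ coincide). For the hard direction, you had the right argument and then talked yourself out of it. The ``subtlety'' you flag dissolves once you note the elementary inequality $q_{ij}\geq a_{ij}$ (the event $\eta_{ij}$ implies $\xi_{ij}$). Under the contrapositive hypothesis that $q_{ij}<1$ for all distinct $i,j$, it follows that $a_{ij}<1$ for all $i\neq j$, hence the all-edges-absent realization has probability $\prod_{i<j}(1-a_{ij})>0$. Its conditional connectivity matrix is $I_n$, so $Q_G\succeq \bigl(\prod_{i<j}(1-a_{ij})\bigr) I_n>0$, and $Q_G$ is positive definite. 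There is no danger of proving ``$Q_G$ is always PD'': the argument only applies under the hypothesis $q_{ij}<1$ for all $i\neq j$; when some $a_{ij}=1$ you are already in the other case. The elaborate plan about kernels and finest common coarsenings is unnecessary.

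By contrast, the paper's proof of this direction is considerably more involved: it proceeds by induction on $n$, using the multiaffine dependence on a single $a_{ln}$ to write $Q_n$ as a convex combination of the matrices obtained by setting $a_{ln}=0$ and $a_{ln}=1$, and then argues by contradiction (via a lemma that a strict convex combination of a PD and a PSD matrix is PD) that failure of positive definiteness propagates down to the case where vertex $v_n$ is isolated, where the inductive hypothesis applies. Your approach, once patched with the observation $q_{ij}\geq a_{ij}$, is substantially shorter.
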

\begin{proof}
See Appendix I.
\end{proof}

Let $\lambda_{1}\geq\ldots\geq\lambda_{n}$ be the eigenvalues of
$Q_{G}$. Note that $\lambda_{1}+\cdots+\lambda_{n}=n$. 
 As an easy consequence of Theorem \ref{thm:positive semi-definite matrix},
$n\geq\lambda_{1}\geq1$ and $1\geq\lambda_{n}\geq0$. In the best
case, $Q_{G}$ is a matrix with all entries equal to $1$. Then $\lambda_{1}=n$
and $\lambda_{2}=\cdots=\lambda_{n}=0$. In the worst case, $Q_{G}$
is an identity matrix. Then $\lambda_{1}=\cdots=\lambda_{n}=1$. This suggests that
$\lambda_{1}$, i.e. the largest eigenvalue of $Q_G$, can be used as a measure of quality of network connectivity
and a larger $\lambda_{1}$ indicates a better quality. 

We will make this idea more concrete in the following analysis. We start our discussion with a connected network and then extend to more generic cases. We will call a network {\it  connected} if for all $i,j\in\{1,\cdots,n\}$,  $q_{ij}>0$. Obviously the probabilistic connectivity matrix of a connected network is \emph{irreducible}  \cite[p. 374]{LT}  as all the entries of the matrix are non-zero. As a measure of the quality of network connectivity, if the path probabilities $q_{ij}$ increase, the largest eigenvalue of the probabilistic connectivity matrix should also increase. This is formally stated in the following theorem:

\begin{thm}\la{tinc}
Let $G(V, E)$ and $G'(V, E')$ be the underlying graphs of two connected networks defined on the same vertex set $V$ but with different link probabilities. Let $Q_G$ and $Q_{G'}$ be the probabilistic connectivity matrices of $G$ and $G'$ respectively and let $\lambda_{\max}\left(Q_G\right)$ and $\lambda_{\max}\left(Q_{G'}\right)$ be the largest eigenvalues of $Q_G$ and $Q_{G'}$ respectively. If $Q_G' - Q_{G}$ is a non-zero, non-negative matrix\footnote{A matrix is \emph{non-negative} if all its entries are greater than or equal to $0$.}, then
\begin{equation}\label{Eq: a strict increase of the largest eigenvalue for connected network}
\lambda_{\max}\left(Q_G\right) < \lambda_{\max}\left(Q_{G'}\right)
\end{equation}
\end{thm}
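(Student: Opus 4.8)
The plan is to exploit the irreducibility established in the paragraph preceding the theorem together with a Perron--Frobenius argument for non-negative matrices. First I would note that since both networks are connected, all entries of $Q_G$ and $Q_{G'}$ are strictly positive, so both matrices are irreducible, symmetric, and (by Theorem \ref{thm:positive semi-definite matrix}) positive semi-definite. For an irreducible non-negative matrix, the Perron--Frobenius theorem guarantees that $\lambda_{\max}$ is a simple eigenvalue with a strictly positive eigenvector; moreover $\lambda_{\max}$ equals the spectral radius, so for symmetric non-negative $Q_G$ we have $\lambda_{\max}(Q_G) = \max_{\|x\|=1} x^{\top} Q_G x$, attained at the Perron eigenvector.

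The key step is a strict monotonicity argument. Let $x$ be the (normalized, strictly positive) Perron eigenvector of $Q_G$, so $\lambda_{\max}(Q_G) = x^{\top} Q_G x$. Then
\begin{equation}
\lambda_{\max}(Q_{G'}) \;\geq\; x^{\top} Q_{G'} x \;=\; x^{\top} Q_G x + x^{\top}\left(Q_{G'} - Q_G\right) x \;=\; \lambda_{\max}(Q_G) + x^{\top}\left(Q_{G'} - Q_G\right) x.
\end{equation}
Because $Q_{G'} - Q_G$ is non-zero and non-negative, and $x$ has all entries strictly positive, the quadratic form $x^{\top}(Q_{G'} - Q_G) x = \sum_{i,j} x_i x_j \left(Q_{G'} - Q_G\right)_{ij}$ is a sum of non-negative terms at least one of which is strictly positive, hence $x^{\top}(Q_{G'} - Q_G) x > 0$. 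This already gives the strict inequality \eqref{Eq: a strict increase of the largest eigenvalue for connected network}.

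I expect the only delicate point is justifying that the Perron eigenvector $x$ can be taken with \emph{strictly} positive entries rather than merely non-negative; this is exactly where irreducibility of $Q_G$ (noted in the excerpt) is essential, since a reducible non-negative matrix may have a Perron vector with zero entries, which would let $x^{\top}(Q_{G'} - Q_G)x$ vanish even when $Q_{G'} - Q_G \neq 0$. I would therefore state the relevant form of Perron--Frobenius explicitly (citing the same reference \cite{LT} used for irreducibility) and emphasize that connectedness of $G$, not of $G'$, is what supplies the strict positivity of the maximizing vector. An alternative, essentially equivalent route would be to write $Q_{G'} = Q_G + \Delta$ with $\Delta \geq 0$ non-zero and invoke the classical theorem that the spectral radius of an irreducible non-negative matrix is strictly increasing under addition of a non-negative non-zero perturbation; but the variational one-line argument above is cleaner and self-contained given what the paper has already set up, so that is the version I would present.
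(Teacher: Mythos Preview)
Your proposal is correct and matches the paper's own proof essentially line for line: the paper also cites Perron--Frobenius (from \cite{LT}) to obtain a strictly positive eigenvector $x$ for the irreducible matrix $Q_G$, and then uses the Rayleigh-quotient chain $\lambda_{\max}(Q_G)x^Tx = x^TQ_Gx = x^TQ_{G'}x - x^T(Q_{G'}-Q_G)x < x^TQ_{G'}x \leq \lambda_{\max}(Q_{G'})x^Tx$ to conclude. Your emphasis that irreducibility of $Q_G$ (not $Q_{G'}$) is the crucial input for strict positivity of $x$ is exactly the point the paper's Lemma isolates.
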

\begin{proof}
See Appendix II
\end{proof}

After having analyzed the situation for connected networks, we now move on to the discussion of disconnected networks and show that the largest eigenvalue of the probabilistic connectivity matrix of a \emph{component}, a concept defined in the next paragraph, provides a good measure of the quality of connectivity of that component. We will analyze two basic situations. Results for more complicated scenarios can be readily obtained from these results and Theorem \ref{tinc}, which applies  to a connected network.

If the network is not  connected, i.e. some entries of its probabilistic connectivity matrix is $0$, it can be easily shown that the network can be decomposed into \emph{disjoint components}. A \emph{component} is a maximal set of vertices where the probability that there is a path between any pair of vertices in the component is greater than zero. Two components are said to be \emph{disjoint} if the probabilities that there is a path between any vertex in the first component and any vertex in the second component are all zeros.

Let the total number of components in the network be $k$ and the number of nodes in the $i^{th}$, $1\leq i \leq k$, component be $n_i$. Further, denote the vertex set of the $i^{th}$ component by $V_i$ and denote the subgraph induced on $V_i$ by $G_i$. Without loss of generality, we assume that the nodes in the network are properly labeled such that
\begin{equation}
V_i = \{v_{\sum_{j=1}^{i-1} n_j+1},  \cdots, v_{\sum_{j=1}^{i-1} n_j+n_i}\}
\end{equation} 
Let $Q_{G_i}$ be the probabilistic connectivity matrix of $G_i$. It follows that the probabilistic connectivity matrix of the network $Q_G$ can be expressed in the form of  $Q_{G_i}$, $1\leq i \leq k$, as
\begin{equation}
Q_{G}=\diag \{Q_{G_1}, \cdots, Q_{G_k}\}
\end{equation}
and
\begin{equation}
\lambda_{\max}\left(Q_{G} \right)=\max \{ \lambda_{\max}\left(Q_{G_1} \right), \cdots, \lambda_{\max}\left(Q_{G_k} \right)\}
\end{equation}

We consider two basic situations: a) there are increases in some entries of $Q_{G}$ from non-zero values but such increases do not change the number of components in the network; b) there are increases in some entries of $Q_{G}$ from zero to non-zero values and such increases reduce the number of components in the network.

Under situation a), the vertex set of each component does not change. Let $Q_{G_i}$ be the probabilistic connectivity matrix of a component whose path probabilities have been increased and let $Q_{G'_i}$ be the probabilistic connectivity matrix of the component after the change in path probabilities. Obviously $Q_{G'_i}-Q_{G_i}$ is a non-zero, non-negative symmetric matrix. It then follows easily from Theorem \ref{tinc} that $\lambda_{\max}\left( Q_{G'_i} \right) > \lambda_{\max}\left( Q_{G_i} \right)$.  Depending on whether $\lambda_{\max}\left( Q_{G'_i} \right)$ is greater than $\lambda_{\max}\left(Q_{G} \right)$ or not however, the largest eigenvalue of the probabilistic connectivity matrix of the network may or may not increase.

We now move on to evaluate situation b) and consider a simplified scenario where increases in the path probabilities merge two originally disjoint components, denoted by $G_i$ and $G_j$. The more complicated scenario where increases in the path probabilities join more than two originally disjoint components can be obtained recursively as an  extension of the above simplified scenario. Let $G'$ be the underlying graph of the network after increases in path probabilities and let $G'_{ij}$ be the subgraph in $G'$ induced on the vertex set $V_i \cup V_j$. Obviously $Q_{G'_{ij}}$ is an irreducible matrix and the following result can be established.
\begin{lemma}\label{lem: proof of eigenvalue of joint component} 
Under the above settings,
\begin{equation}
\lambda_{\max}\left( Q_{G'_{ij}} \right) > \lambda_{\max}\left( \diag \{ Q_{G_i}, Q_{G_j}  \} \right)
\end{equation}
\end{lemma}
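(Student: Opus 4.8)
The plan is to reduce Lemma \ref{lem: proof of eigenvalue of joint component} to an application of Theorem \ref{tinc}. The key observation is that $G'_{ij}$ is a connected network on the vertex set $V_i \cup V_j$ (since increasing path probabilities merged $G_i$ and $G_j$, every pair of vertices in $V_i \cup V_j$ now has a positive path probability), so $Q_{G'_{ij}}$ is irreducible as already noted. Now consider the block-diagonal matrix $\diag\{Q_{G_i}, Q_{G_j}\}$. This is \emph{not} the probabilistic connectivity matrix of a connected network, so Theorem \ref{tinc} does not apply to it directly; this is the main obstacle, and the way around it is to introduce an intermediate graph.

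First I would define an intermediate graph $H$ on the vertex set $V_i \cup V_j$ whose link probabilities agree with those of $G$ \emph{within} $V_i$ and within $V_j$, but are zero for all cross-pairs (one vertex in $V_i$, the other in $V_j$). Then, by the block structure of the probabilistic connectivity matrix over disjoint components established in the paragraph preceding the lemma, $Q_H = \diag\{Q_{G_i}, Q_{G_j}\}$. Next I would compare $H$ with $G'_{ij}$: since $G'_{ij}$ is obtained from a configuration that includes all the within-$V_i$ and within-$V_j$ edges of $G$ (the merging only adds connectivity, it does not destroy it — here I would invoke that the path-existence event is increasing, so each $q_{lm}$ in $G'_{ij}$ is at least the corresponding entry coming from the within-component structure), every entry of $Q_{G'_{ij}}$ is at least the corresponding entry of $Q_H$, i.e. $Q_{G'_{ij}} - Q_H$ is non-negative. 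It is also non-zero: because the two components are merged in $G'_{ij}$, there is at least one pair $v_l \in V_i$, $v_m \in V_j$ with $q_{lm} > 0$ in $G'_{ij}$, while the corresponding entry of $Q_H$ is $0$.

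With these two facts in hand — $Q_{G'_{ij}} - Q_H$ non-zero and non-negative, and both $H$-block and $G'_{ij}$ viewed on the same vertex set — I would like to apply Theorem \ref{tinc}. The subtlety is that Theorem \ref{tinc} as stated requires \emph{both} graphs to be connected, whereas $H$ is disconnected. So the cleanest route is: apply the Perron--Frobenius machinery directly. Since $Q_{G'_{ij}}$ is irreducible, symmetric, non-negative, with all diagonal entries equal to $1$, its largest eigenvalue $\lambda_{\max}(Q_{G'_{ij}})$ is the Perron root and has a strictly positive eigenvector $x > 0$ (entrywise). Meanwhile $\lambda_{\max}(\diag\{Q_{G_i},Q_{G_j}\}) = \max\{\lambda_{\max}(Q_{G_i}), \lambda_{\max}(Q_{G_j})\}$; let $y \geq 0$ be a (block-supported) unit eigenvector achieving this value. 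Then the variational characterization gives
\begin{equation}
\lambda_{\max}(Q_{G'_{ij}}) \geq y^\top Q_{G'_{ij}} y = y^\top Q_H y + y^\top (Q_{G'_{ij}} - Q_H) y = \lambda_{\max}(\diag\{Q_{G_i},Q_{G_j}\}) + y^\top (Q_{G'_{ij}} - Q_H) y,
\end{equation}
and since $Q_{G'_{ij}} - Q_H$ is non-negative and $y \geq 0$, the last term is $\geq 0$, giving the non-strict inequality for free. To upgrade to strict inequality I would argue that if equality held, then $y$ would be a Perron eigenvector of the irreducible matrix $Q_{G'_{ij}}$, hence $y > 0$ entrywise; but then $y^\top(Q_{G'_{ij}} - Q_H)y > 0$ because $Q_{G'_{ij}} - Q_H$ has a strictly positive entry, a contradiction. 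Hence the strict inequality holds. The main obstacle, as noted, is handling the mismatch with the hypotheses of Theorem \ref{tinc}; resolving it by going back to the Perron--Frobenius/Rayleigh-quotient argument (essentially re-running the proof of Theorem \ref{tinc} with one graph disconnected) is the cleanest fix, and I would likely just cite the proof in Appendix II for the details.
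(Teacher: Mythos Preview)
The paper omits the proof of this lemma due to space limitations, so there is no original argument to compare against directly. Your proposal is correct and is precisely the natural extension of Lemma \ref{lab} in Appendix II: you set $A=\diag\{Q_{G_i},Q_{G_j}\}$ and $B=Q_{G'_{ij}}$, observe that $B-A$ is non-negative and non-zero while $B$ is irreducible, and then rerun the Rayleigh-quotient argument. The only difference from Lemma \ref{lab} is that $A$ is reducible, so its top eigenvector $y$ need not be strictly positive; your contradiction argument handles this cleanly, since equality would force $y$ to be the Perron vector of the irreducible $B$ and hence strictly positive, contradicting $y^{\top}(B-A)y=0$.

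One minor remark: your justification that $Q_{G'_{ij}}-\diag\{Q_{G_i},Q_{G_j}\}$ is entrywise non-negative is stated via monotonicity of the path event. In the paper's framing of situation (b), this is immediate once you note that $\diag\{Q_{G_i},Q_{G_j}\}$ is exactly the restriction of $Q_G$ to $V_i\cup V_j$ (because $V_i,V_j$ were disjoint components of $G$) and the hypothesis is that the entries of $Q_{G'}$ dominate those of $Q_G$; since the merged $V_i\cup V_j$ is itself a component of $G'$, the restriction of $Q_{G'}$ to $V_i\cup V_j$ equals $Q_{G'_{ij}}$. So you do not need to invoke link-level monotonicity at all---the non-negativity of $Q_{G'_{ij}}-\diag\{Q_{G_i},Q_{G_j}\}$ is part of the standing hypothesis. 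Mentioning this would tighten the write-up.
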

 The proof of Lemma \ref{lem: proof of eigenvalue of joint component} is omitted due to space limitations.

Thus indeed  the largest eigenvalues of the probabilistic connection matrices associated with disjoint components measure the quality of the components connection. 

\begin{remrk}\label{remark on optimization}
The fact that the largest eigenvalue of the probabilistic connectivity matrix measures connectivity, suggests the following obvious optimization. Modify  one or more $a_{ij}$
under suitable constraints to maximize the largest eigenvalue of the probabilistic connectivity matrix. Results in \cite{ZD} and  \cite{Dasgupta94Lyapunov} suggest that the multiaffine dependence of the $q_{ij}$  on the $a_{ij}$ {\it together with the fact
that $Q_G$ is positive semi-definite}  promise to facilitate such optimization. 
\end{remrk}

\section{Conclusions and Further Work \label{sec:Conclusions}}

In this paper we have explored the use of the probabilistic connectivity matrix as a possible tool to measure the quality of network connectivity. Some interesting properties of the probabilistic connectivity matrix and their connections to the quality of network connectivity were demonstrated. Particularly, the off-diagonal entries of the probabilistic connectivity matrix provide a measure of the quality of end-to-end connections and we have also provided theoretical analysis supporting the use of the largest eigenvalue of the probabilistic connectivity matrix as a measure of the quality of overall network connectivity. The analysis focused on the comparison of networks with the same number of nodes. For networks with different number of nodes, the largest eigenvalue of the probabilistic connectivity matrix normalized by the number of nodes may be used as the quality metric. 

Inequalities between the entries of the probabilistic connectivity matrix were established. These may provide insights into the correlations between quality of end-to-end connections. Further, the probabilistic connectivity matrix was shown to be a positive semi-definite matrix and its off-diagonal entries are multiaffine functions of link probabilities. These two properties are expected to be very helpful in optimization and robust network design, e.g. determining the link whose quality improvement will result in the maximum gain in network quality, and determining quantitatively  the relative criticality of a link to either a particular end-to-end connection or to the entire network.

The results in the paper rely on two main assumptions: the links are symmetric \emph{and}  independent. We expect that our analysis can be readily extended such that the first assumption on symmetric links can be removed -- in fact the results in Section \ref{sec:General-Properties-of probabilistic connectivity matrix} do not need this assumption. While in the asymmetric case the probabilistic connectivity matrix is no longer guaranteed to be positive semi-definite, we conjecture that the largest eigenvalue retains its significance. Discarding the second assumption  requires more work. However, we are encouraged by the following observation. If we introduce conditional edge probabilities into the mix, then $Q_G$ is still a multiaffine function of the $a_{ij}$ and the conditional probabilities. Thus we still expect all the results in Section \ref{sec:qprop} to hold, though the proof may be non-trivial. In real applications link correlations may arise due to both physical layer correlations and correlations caused by traffic congestion.

Another implicit assumption in the paper is that traffic is uniformly distributed and traffic between every source-destination pair is equally important. If this is not the case, a weighted version of the probabilistic connectivity matrix can be contemplated in which the entries of the matrix are weighted by a measure of the importance of the associated source-destination pairs. It remains to be investigated on whether our results can be extended to a weighted probabilistic connectivity matrix.

\section*{Appendix I: Proof of Theorem \ref{thm:positive semi-definite matrix}}
Let $a_{U}^{\left(n\right)}$ be the vector of $a_{ij}$, $i\neq j$
(remember that $a_{ij}$ is the probability that there is an edge
between $v_{i}$ and $v_{j}$): \[
a_{U}^{\left(n\right)}\triangleq\left[a_{ij}\right]_{i=1,j>i}^{n}\]
Also define the set \[
\Pi_{U}^{n}=\left\{ \left.a_{U}^{\left(n\right)}\right|0\leq a_{ij}\leq1,i\in\left\{ 1,\ldots,n\right\} ,j>i\right\} \]
The corners of the above set are given by:\[
\Pi_{U_{c}}^{n}=\left\{ \left.a_{U}^{\left(n\right)}\right|a_{ij}\in\left\{ 0,1\right\} ,i\in\left\{ 1,\ldots,n\right\} ,j>i\right\} \]
These corners will play an important role in the subsequent development.

We observe that the positive semi-definiteness of a matrix is known
to be a convex property, as defined in \cite{Dasgupta94Lyapunov}.
That is, if $A$ and $B$ are positive semi-definite then so is $\left(1-\alpha\right)A+\alpha B$,
$\forall\alpha\in\left[0,1\right]$. In fact one can say a bit more:
\begin{lemma}
\label{lem:linear combination of positive semi-definite matrices}Consider
$n\times n$ matrices $A>0$ and $B\geq0$. Then: $\left(1-\alpha\right)A+\alpha B>0$,
$\forall\alpha\in\left[0,1\right)$\end{lemma}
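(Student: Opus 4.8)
The plan is to prove Lemma~\ref{lem:linear combination of positive semi-definite matrices} directly from the definition of positive (semi-)definiteness applied to an arbitrary nonzero test vector. Fix any $x \in \mathbb{R}^n$ with $x \neq 0$. First I would write the quadratic form of the convex combination as
\[
x^{T}\left(\left(1-\alpha\right)A + \alpha B\right)x = \left(1-\alpha\right)\,x^{T}Ax + \alpha\,x^{T}Bx .
\]
Since $B \geq 0$, the term $\alpha\, x^{T}Bx$ is nonnegative for every $\alpha \in [0,1)$. Since $A > 0$ and $x \neq 0$, we have $x^{T}Ax > 0$ strictly; and for $\alpha \in [0,1)$ the coefficient $1-\alpha$ is strictly positive, so $(1-\alpha)\,x^{T}Ax > 0$ strictly. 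Adding a strictly positive quantity to a nonnegative quantity yields a strictly positive quantity, hence $x^{T}\left((1-\alpha)A + \alpha B\right)x > 0$. As $x$ was an arbitrary nonzero vector, this shows $(1-\alpha)A + \alpha B > 0$.

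The symmetry needed so that "positive definite'' is the appropriate notion is inherited immediately: a convex combination of symmetric matrices is symmetric, and both $A$ and $B$ are symmetric in the intended application (indeed throughout the paper). I would state this in one line at the start so the quadratic-form argument is the whole content of the proof.

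There is essentially no obstacle here; the only point worth flagging is the role of the half-open interval. The argument breaks at $\alpha = 1$ precisely because then the $A$-term vanishes and one is left with $x^{T}Bx$, which may be zero for some $x \neq 0$ when $B$ is only positive semi-definite. So the statement is sharp, and the proof should make transparent that strict positivity of $1-\alpha$ is exactly what is being used. I would close with a sentence noting that the same computation, with $1-\alpha$ merely nonnegative, recovers the already-stated convexity of positive semi-definiteness, which situates the lemma as the natural strengthening of that fact.
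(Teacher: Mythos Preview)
Your argument is correct and is in fact the most direct route: evaluate the quadratic form at an arbitrary nonzero $x$, split it linearly, and use $(1-\alpha)x^{T}Ax>0$ together with $\alpha\,x^{T}Bx\geq 0$. Nothing is missing.

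The paper proves the same lemma differently. It invokes simultaneous reduction by congruence: there exists $H$ with $H(A+A^{T})H=I$ and $H(B+B^{T})H=\Lambda$ diagonal and nonnegative, and then observes that $(1-\alpha)I+\alpha\Lambda>0$ for $\alpha\in[0,1)$. This is a heavier tool than your bare quadratic-form computation; the payoff of the paper's route is that it makes the eigenvalue structure of the pencil explicit (one sees exactly which directions degenerate at $\alpha=1$), whereas your approach trades that structural information for brevity and requires no auxiliary result beyond the definitions. Both handle the possible lack of symmetry in the same way---the paper by passing to $A+A^{T}$, you by noting that the quadratic-form inequality is what ``$>0$'' means here and that symmetry holds in the application anyway.
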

\begin{proof}
It is well known that there exists a matrix $H$, such that $H\left(A+A^{T}\right)H=I$
and for some diagonal $\Lambda$, $H\left(B+B^{T}\right)H=\Lambda\geq0$.
Then the result follows by noting that $\left(1-\alpha\right)I+\alpha\Lambda>0$,
$\forall\alpha\in\left[0,1\right)$
\end{proof}

Next we provide a key result that exploits the multiaffine dependence of $Q_G$ on the $a_{ij}$. 
\begin{prop}
\label{pro:positive-semidefinite from boundary to general cases}Suppose
$Q_{G}$ is positive semi-definite for all $a_{U}^{\left(n\right)}\in\Pi_{U_{c}}^{n}$.
Then it is positive semi-definite for all $a_{U}^{\left(n\right)}\in\Pi_{U}^{n}$.\end{prop}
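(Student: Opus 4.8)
The plan is to exploit the multiaffine structure of $Q_G$ established in Lemma \ref{lem:multiaffine function} together with the convexity result of Lemma \ref{lem:linear combination of positive semi-definite matrices} (or rather the weaker convexity fact that a convex combination of positive semi-definite matrices is positive semi-definite). The key observation is that a multiaffine function of the variables $a_{ij}$, restricted to the box $\Pi_U^n$, attains its values as convex combinations of its values at the corners $\Pi_{U_c}^n$: fixing all but one variable, $Q_G$ is affine in that variable, so its value at any interior point is a convex combination of its values at the two endpoints of that coordinate's range. Iterating this coordinate-by-coordinate over all $m = \binom{n}{2}$ edge variables expresses $Q_G$ at an arbitrary point of $\Pi_U^n$ as a convex combination (with nonnegative weights summing to one) of the matrices $Q_G$ evaluated at the $2^m$ corners.

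First I would set up the induction on the number of ``free'' coordinates. Order the edge variables $a_{e_1}, \ldots, a_{e_m}$. For a given target point $a_U^{(n)} = (a_{e_1}, \ldots, a_{e_m}) \in \Pi_U^n$, write $Q_G(a_{e_1}, \ldots, a_{e_m})$. By Lemma \ref{lem:multiaffine function}, holding $a_{e_2}, \ldots, a_{e_m}$ fixed, $Q_G$ is affine in $a_{e_1}$, so
\begin{equation}
Q_G(a_{e_1}, a_{e_2}, \ldots, a_{e_m}) = (1 - a_{e_1}) Q_G(0, a_{e_2}, \ldots, a_{e_m}) + a_{e_1} Q_G(1, a_{e_2}, \ldots, a_{e_m}).
\end{equation}
Since $a_{e_1} \in [0,1]$, this is a convex combination. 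Then I would apply the same decomposition to each of the two resulting matrices in the variable $a_{e_2}$, and continue. After $m$ steps every argument has been set to $0$ or $1$, i.e. each term is $Q_G$ evaluated at a corner in $\Pi_{U_c}^n$, and the coefficients are products of factors of the form $a_{e_\ell}$ or $(1 - a_{e_\ell})$, hence nonnegative and summing to one (a telescoping/partition-of-unity identity, which I would note follows by expanding $\prod_\ell (a_{e_\ell} + (1 - a_{e_\ell})) = 1$).

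Finally I would invoke the hypothesis: each corner matrix $Q_G|_{\text{corner}}$ is positive semi-definite by assumption, and a convex combination of positive semi-definite matrices is positive semi-definite (the convexity property cited from \cite{Dasgupta94Lyapunov}, or simply $x^T(\sum_c \theta_c Q_c)x = \sum_c \theta_c (x^T Q_c x) \geq 0$ for all $x$). Hence $Q_G$ is positive semi-definite at the arbitrary point $a_U^{(n)} \in \Pi_U^n$, which is what we want.

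I do not anticipate a serious obstacle here; the argument is essentially bookkeeping. The one point that warrants care is making the iterated convex-combination expansion rigorous — specifically, verifying that at every stage of the induction the intermediate matrices $Q_G(0/1, \ldots, 0/1, a_{e_\ell}, \ldots, a_{e_m})$ with some leading coordinates frozen at $\{0,1\}$ still genuinely equal the probabilistic connectivity matrix of the graph with those edges forced present/absent and the rest random, so that Lemma \ref{lem:multiaffine function}'s multiaffinity continues to apply to the next free variable. This is true because conditioning on $I_{e_\ell} = 0$ or $1$ just yields another random graph on the remaining edges, but it should be stated explicitly. The nonnegativity and normalization of the final coefficients is the only genuine computation, and it is the elementary identity $\prod_{\ell=1}^m \bigl(a_{e_\ell} + (1 - a_{e_\ell})\bigr) = 1$.
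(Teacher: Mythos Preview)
Your proposal is correct and follows essentially the same approach as the paper. The paper's proof is a one-line citation---it combines Lemma~\ref{lem:multiaffine function} with Corollary~2.1 of \cite{Dasgupta94Lyapunov}, which is precisely the general principle that a convex property of a multiaffine matrix function on a box need only be checked at the corners---whereas you spell out that corner-decomposition argument explicitly; your coordinate-by-coordinate convex expansion is exactly what that corollary encapsulates. (Your caveat about intermediate matrices is unnecessary: once $Q_G$ is known to be multiaffine as a function on $\Pi_U^n$, the iterated affine interpolation is purely algebraic and does not require reinterpreting the partially-frozen matrices probabilistically.)
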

\begin{proof}
The combined use of  Lemma \ref{lem:multiaffine function}  and  Corollary 2.1 of \cite{Dasgupta94Lyapunov} proves the result.
\end{proof}

We must next  show that $Q_{G}$ is positive semi-definite for all
$a_{U}^{\left(n\right)}\in\Pi_{U_{c}}^{n}$. The following lemma is
used in the proof of the conclusion that $Q_{G}$ is positive semi-definite
for all $a_{U}^{\left(n\right)}\in\Pi_{U_{c}}^{n}$:
\begin{lemma}
\label{lem:q_ij must be either 0 or 1}Suppose $a_{U}^{\left(n\right)}\in\Pi_{U_{c}}^{n}$,
then for all $i$, $j$, $q_{ij}\in\left\{ 0,1\right\} $.\end{lemma}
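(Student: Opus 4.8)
The plan is to show that when all edge probabilities $a_{ij}$ are either $0$ or $1$, the random graph $G(V,E)$ is in fact deterministic: each potential edge $e_{ij}$ is either certainly present (if $a_{ij}=1$) or certainly absent (if $a_{ij}=0$). Consequently, for any pair of distinct vertices $v_i$ and $v_j$, either there exists a path between them in this fixed graph or there does not; there is no intermediate case. Hence $q_{ij}=\Pr(\xi_{ij})$ is the probability of a deterministic event, which is either $0$ or $1$.

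First I would make precise the reduction to a deterministic graph. Recall that $a_{ij}=\Pr(I_{ij}=1)$ and that the indicator variables $I_{ij}$, $i<j$, are mutually independent. If $a_{ij}\in\{0,1\}$ for every pair, then each $I_{ij}$ is almost surely constant: $I_{ij}=1$ a.s.\ when $a_{ij}=1$, and $I_{ij}=0$ a.s.\ when $a_{ij}=0$. Therefore the edge set $E$ realized by $(I_1,\ldots,I_m)$ equals, with probability one, the fixed set $E^{*}=\{e_{ij}: a_{ij}=1\}$, and $G$ coincides almost surely with the deterministic graph $G^{*}(V,E^{*})$.

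Next I would invoke the elementary graph-theoretic fact that in a deterministic graph, "there is a path between $v_i$ and $v_j$" is a fixed statement — true or false — with no randomness involved. This is essentially the observation already made in Section~\ref{sec:probabilistic connectivity matrix} that the entries of $\left.Q_G\right|(I_1,\ldots,I_m)$ are $0$ or $1$; here the conditioning is vacuous because $(I_1,\ldots,I_m)$ is almost surely a single fixed value. So $q_{ij}=\Pr(\xi_{ij})=\mathbf{1}[v_i \text{ and } v_j \text{ are connected in } G^{*}]\in\{0,1\}$. The diagonal entries are $1$ by definition, which also lies in $\{0,1\}$, so the claim holds for all $i,j$.

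I do not anticipate a serious obstacle here; the lemma is essentially a bookkeeping step that makes explicit why evaluating $Q_G$ at the corners $\Pi_{U_c}^n$ reduces the positive-semidefiniteness question to deterministic graphs. The only point requiring a modicum of care is the (routine) justification that independence plus degenerate marginals forces the joint realization of $(I_1,\ldots,I_m)$ to be almost surely constant, so that $\xi_{ij}$ is (up to a null event) a deterministic event and $\Pr(\xi_{ij})\in\{0,1\}$; this is immediate from $\Pr(I_{ij}=a_{ij})=1$ for each pair together with a finite union bound over the $m$ potential edges.
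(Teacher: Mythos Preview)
Your proposal is correct and follows essentially the same approach as the paper: both argue that at a corner of $\Pi_{U_c}^{n}$ every edge indicator is almost surely constant, so $G$ is deterministic and each $q_{ij}$ is the probability of a deterministic event, hence $0$ or $1$. Your version is slightly more detailed (explicit mention of the finite union bound and the diagonal entries), but the underlying idea is identical.
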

\begin{proof}
When $a_{U}^{\left(n\right)}\in\Pi_{U_{c}}^{n}$, either there is
an edge between vertices $v_{i}$ and $v_{j}$ surely when $a_{ij}=1$;
or there is no edge between vertices $v_{i}$ and $v_{j}$ surely
when $a_{ij}=0$. The graph $G\left(V,E\right)$ becomes a deterministic
graph. It follows that either there is a path between vertices $v_{i}$
and $v_{j}$ surely or there is no path between vertices $v_{i}$
and $v_{j}$ surely, i.e. for all $i$, $j$, $q_{ij}\in\left\{ 0,1\right\} $.
\end{proof}

It can be further shown that the following lemma  holds:
\begin{lemma}
\label{lem:identical rows and columns}Suppose for some distinct $i$,
$j$, $q_{ij}=1$. Then row $i$ and row $j$ of $Q_{G}$ are identical,
as are columns $i$ and $j$.\end{lemma}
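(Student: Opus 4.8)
The plan is to show that whenever $q_{ij}=1$, the vertices $v_i$ and $v_j$ are ``connectivity-equivalent'' in the sense that, with probability one, every vertex that can reach $v_i$ can also reach $v_j$, and vice versa. Concretely, for any third vertex $v_\ell$, I claim $q_{i\ell}=q_{j\ell}$, which is exactly the statement that rows $i$ and $j$ of $Q_G$ agree in the $\ell$-th coordinate; symmetry of $Q_G$ then gives the same statement for columns. The diagonal entries also match up: $q_{ii}=1=q_{ij}=q_{ji}=q_{jj}$, so the two rows agree on the $\{i,j\}$ coordinates as well, and the rows (hence columns) are identical.

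The key step is the equality $q_{i\ell}=q_{j\ell}$. First observe that $q_{ij}=\Pr(\xi_{ij})=1$ means the event $\xi_{ij}$ occurs almost surely. Now I would argue, using the FKG inequality exactly as in Lemma \ref{lem:an inequality on entries of Q - lower bound} but exploiting that the conditioning event has probability one: since $\xi_{ij}$ and $\xi_{i\ell}$ are both increasing events, FKG gives $\Pr(\xi_{ij}\cap\xi_{i\ell})\ge\Pr(\xi_{ij})\Pr(\xi_{i\ell})=\Pr(\xi_{i\ell})$, while trivially $\Pr(\xi_{ij}\cap\xi_{i\ell})\le\Pr(\xi_{i\ell})$, so $\Pr(\xi_{ij}\cap\xi_{i\ell})=\Pr(\xi_{i\ell})$; combined with $\Pr(\xi_{ij})=1$ this forces $\xi_{ij}\cap\xi_{i\ell}$ and $\xi_{i\ell}$ to coincide up to a null event, i.e. $\xi_{i\ell}\Rightarrow\xi_{ij}$ a.s. The cleaner route, which I would actually use, is the deterministic observation behind it: conditioned on any instance $(I_1,\dots,I_m)$ for which $\xi_{ij}$ holds, $v_i$ and $v_j$ lie in the same connected component of the realized graph, so $v_\ell$ is reachable from $v_i$ if and only if it is reachable from $v_j$; that is, the $(i,\ell)$ and $(j,\ell)$ entries of $\left.Q_G\right|(I_1,\dots,I_m)$ are equal (both $0$ or both $1$) on the event $\xi_{ij}$, which has full probability. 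Taking expectations via \eqref{eq:Computation of the probabilistic connectivity matrix} yields $q_{i\ell}=q_{j\ell}$.

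I would then assemble these coordinatewise equalities: for every $\ell\in\{1,\dots,n\}$ we have shown $\left(Q_G\right)_{i\ell}=\left(Q_G\right)_{j\ell}$ (checking the cases $\ell=i$, $\ell=j$, and $\ell\notin\{i,j\}$ separately, the first two being immediate from $q_{ij}=1$ and the diagonal convention), so rows $i$ and $j$ of $Q_G$ are identical. Because $Q_G$ is symmetric, transposing gives that columns $i$ and $j$ are identical as well, completing the proof.

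The main obstacle is purely one of rigor in the probabilistic step: one must be careful that ``$q_{ij}=1$'' is used correctly as ``$\xi_{ij}$ holds almost surely'' rather than ``$\xi_{ij}$ holds on every realization,'' and that the reachability equivalence of $v_i$ and $v_j$ is a deterministic graph fact valid precisely on the full-probability event $\xi_{ij}$. Handling this via the conditional-expectation representation \eqref{eq:Computation of the probabilistic connectivity matrix} sidesteps any delicacy with FKG and makes the null-set bookkeeping trivial, so I expect the argument to be short.
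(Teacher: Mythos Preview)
Your ``cleaner route'' via the conditional representation \eqref{eq:Computation of the probabilistic connectivity matrix} is correct and complete: on the full-probability event $\xi_{ij}$ the vertices $v_i$ and $v_j$ lie in the same component of the realized graph, so the indicator entries $(i,\ell)$ and $(j,\ell)$ of $\left.Q_G\right|(I_1,\dots,I_m)$ coincide, and averaging yields $q_{i\ell}=q_{j\ell}$. The paper, however, takes a shorter algebraic route that you should be aware of: it simply invokes Lemma~\ref{lem:an inequality on entries of Q - lower bound} twice, obtaining $q_{ik}\ge q_{ij}q_{jk}=q_{jk}$ and $q_{jk}\ge q_{ji}q_{ik}=q_{ik}$, hence $q_{ik}=q_{jk}$. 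That argument is two lines once Lemma~\ref{lem:an inequality on entries of Q - lower bound} is in hand, whereas yours is self-contained and does not rely on FKG at all---a genuine trade-off, since the paper's version piggybacks on an inequality it has already proved, while yours would stand alone even without Section~\ref{sec:General-Properties-of probabilistic connectivity matrix}. One minor note: your first route via FKG is a detour that does not quite close the gap as written (showing $\xi_{i\ell}\Rightarrow\xi_{ij}$ a.s.\ is automatic from $\Pr(\xi_{ij})=1$ and does not by itself yield $q_{i\ell}=q_{j\ell}$), so you are right to prefer the realization-wise argument.
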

\begin{proof}
Note that $Q_{G}$ is a symmetric matrix. Thus it suffices to show
that the row property holds. One has \begin{equation}
q_{ij}=q_{ji}=q_{ii}=q_{jj}=1 \label{eq: equality of q_ij and q_ji}\end{equation}
Thus the $i^{th}$ and $j^{th}$ entries of the $i^{th}$ and $j^{th}$
rows are identical. Now consider any $k$ distinct from $i$ and $j$.
Using Lemma \ref{lem:an inequality on entries of Q - lower bound}  and \eqref{eq: equality of q_ij and q_ji}:
\[
q_{ik}\geq q_{ij}q_{jk}=q_{jk} \text{   and   } q_{jk}\geq q_{ij}q_{ik}=q_{ik}\]
It follows that: $q_{jk}=q_{ik}$.
\end{proof}

We need one last lemma to complete the proof.
\begin{lemma}
\label{lem:diagonalization of Q_G}Define $u_{m}$ to be a $m$-vector
of all ones. Suppose $a_{U}^{\left(n\right)}\in\Pi_{U_{c}}^{n}$.
Then for some $k$, there exist positive integers $m_{1}, \cdots, 
m_{k}$ whose sum is $n$, and a permutation matrix $P$, such that:
\begin{equation}
Q_{G}=P\diag\left\{ u_{m_1}u_{m_1}^{T},\ldots,u_{m_k}u_{m_k}^{T}\right\} P^{T}\label{eq:diagonalization of Q_G}\end{equation}
\end{lemma}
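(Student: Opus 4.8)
The plan is to show that when all edge probabilities are $0$ or $1$, the graph $G(V,E)$ is deterministic, so the probabilistic connectivity matrix $Q_G$ is literally the connectivity (reachability) matrix of an ordinary graph. First I would invoke Lemma \ref{lem:q_ij must be either 0 or 1} to record that every entry $q_{ij}$ is either $0$ or $1$, and that $q_{ij}=1$ precisely when $v_i$ and $v_j$ lie in the same connected component of the deterministic graph. The key structural fact, which I would derive from Lemma \ref{lem:identical rows and columns}, is that ``being in the same component'' is an equivalence relation on $V$: it is reflexive since $q_{ii}=1$, symmetric since $Q_G$ is symmetric, and transitive because if $q_{ij}=1$ then rows $i$ and $j$ of $Q_G$ coincide (so $q_{jk}=q_{ik}$ for all $k$, hence $q_{ik}=1$ and $q_{jk}=1$ propagate together). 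Thus $V$ partitions into equivalence classes $C_1,\ldots,C_k$ of sizes $m_1,\ldots,m_k$ with $\sum_{r=1}^k m_r = n$.

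Next I would argue that, within this partition, $q_{ij}=1$ iff $v_i$ and $v_j$ belong to the same class, and $q_{ij}=0$ otherwise. This is immediate from the definition of the classes together with Lemma \ref{lem:q_ij must be either 0 or 1}: if $v_i,v_j$ are in the same class then $q_{ij}=1$ by construction of the equivalence relation; if they are in different classes then $q_{ij}\neq 1$, and since $q_{ij}\in\{0,1\}$ we get $q_{ij}=0$. Consequently, relabeling the vertices so that all of $C_1$ comes first, then $C_2$, and so on, turns $Q_G$ into a block-diagonal matrix whose $r$-th diagonal block is the $m_r\times m_r$ all-ones matrix $u_{m_r}u_{m_r}^{T}$, and all off-block entries are zero. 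The relabeling is exactly the action of a permutation matrix $P$: if $\sigma$ is the permutation sending the new index order to the original order and $P$ its matrix, then $P^{T} Q_G P$ is block diagonal, i.e. $Q_G = P \diag\{u_{m_1}u_{m_1}^{T},\ldots,u_{m_k}u_{m_k}^{T}\} P^{T}$, which is \eqref{eq:diagonalization of Q_G}. Finally I would note the $m_r$ are positive integers since each class is nonempty.

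I do not expect any real obstacle here; the only point requiring a little care is the bookkeeping that turns the equivalence-class partition into an explicit permutation matrix $P$, and the verification that off-diagonal blocks are genuinely zero (which, again, is just Lemma \ref{lem:q_ij must be either 0 or 1} forcing the value $0$ once it is not $1$). The role of this lemma in the larger argument is then clear: each block $u_{m_r}u_{m_r}^{T}$ is positive semi-definite (its eigenvalues are $m_r$ and $0$ with multiplicity $m_r-1$), a permutation similarity preserves positive semi-definiteness, so $Q_G$ is positive semi-definite for every corner $a_U^{(n)}\in\Pi_{U_c}^{n}$; combined with Proposition \ref{pro:positive-semidefinite from boundary to general cases} this yields positive semi-definiteness of $Q_G$ on all of $\Pi_U^{n}$, completing the first assertion of Theorem \ref{thm:positive semi-definite matrix}.
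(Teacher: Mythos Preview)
Your proof is correct and uses the same two ingredients as the paper (Lemma~\ref{lem:q_ij must be either 0 or 1} for the $\{0,1\}$-valuedness and Lemma~\ref{lem:identical rows and columns} for the row-coincidence), but the organization is genuinely different. The paper proceeds by induction on $n$: it locates a pair with $q_{ij}=1$, permutes so that the first row has its $m_1$ ones in positions $1,\ldots,m_1$, uses Lemma~\ref{lem:identical rows and columns} to peel off a single $u_{m_1}u_{m_1}^{T}$ block, and then invokes the inductive hypothesis on the remaining $(n-m_1)\times(n-m_1)$ principal submatrix. You instead argue directly that the relation ``$q_{ij}=1$'' is an equivalence relation on $V$, with transitivity coming from Lemma~\ref{lem:identical rows and columns}, and then read off all blocks at once from the equivalence classes. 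Your route is shorter and conceptually cleaner, since it identifies the block structure globally rather than one block at a time; the paper's inductive argument is slightly more explicit about how the permutation is built up recursively. Either way the conclusion and its role in proving Theorem~\ref{thm:positive semi-definite matrix} are identical.
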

\begin{proof}
We prove the lemma by induction. 

As an easy consequence of Lemma \ref{lem:q_ij must be either 0 or 1},
the result clearly holds for $n=2$ because when $n=2$, $Q_{G}$
is either equal to an identity matrix or equal to a matrix of all
ones. 

Now suppose that the lemma holds for all $n\leq m$. For convenience,
we use also $Q_{n}$ to denote $Q_{G}$ when $\left|V\right|=n$.

When $n=m+1$, consider $Q_{m+1}$, corresponding to any $a_{U}^{\left(n\right)}\in\Pi_{U_{c}}^{n}$.
Because of Lemma \ref{lem:q_ij must be either 0 or 1}, all entries
of $Q_{m+1}$ are in $\left\{ 0,1\right\} $. If $q_{ij}=0$, $\forall i\neq j$,
then the results hold with $m_{i}=1$ and $k=n$. Now suppose there exists
some distinct $i$ and $j$ for which $q_{ij}=1$. By symmetrically permuting
$Q_{m+1}$, i.e. by relabeling the nodes if necessary, one can without
loss of generality choose $\left\{ i,j\right\} =\left\{ 1,2\right\} $.
Choose $m_{l}$ to equal the number of ones in the
first row of this possibly permuted $Q_{m+1}$. Through a further
symmetric permutation/relabeling if necessary, without loss of generality
one has:\[
q_{1j}=q_{j1}=1,\;\forall j\in\left\{ 1,\ldots,m_{1}\right\} \]
From Lemma \ref{lem:q_ij must be either 0 or 1}:\[
q_{1j}=q_{j1}=0,\;\forall j\in\left\{ m_{1}+1,\ldots,n\right\} \]
From Lemma \ref{lem:identical rows and columns}, there holds:\[
q_{ij}=\begin{cases}
1 & \forall i,j\in\left\{ 1,\ldots,m_{1}\right\} \\
0 & \forall i\in\left\{ 1,\ldots,m_{1}\right\} \;\textrm{and}\; j\in\left\{ m_{1}+1,\ldots,n\right\} \\
0 & \forall j\in\left\{ 1,\ldots,m_{1}\right\} \;\textrm{and}\; i\in\left\{ m_{1}+1,\ldots,n\right\} \end{cases}\]
Thus after relabeling one can express: \[
Q_{G}=\diag{\left\{ u_{m_1}u_{m_1}^{T},Q_{m+1-m_{1}}\right\} }\]
Further, there is no path from the vertex set $\left\{ v_{1},\ldots,v_{m_{1}}\right\} $
to the vertex set $\left\{ v_{m_{1}},\ldots,v_{n}\right\} $ and vice versa. 

Obviously the entries of $Q_{m+1-m_{1}}$ form legitimate path probabilities
with the vertex set $\left\{ v_{m_{1}},\ldots,v_{n}\right\} $. Then
the inductive hypothesis proves the result.
\end{proof}

We are now ready to prove Theorem \ref{thm:positive semi-definite matrix}. 

Observe the matrix in \eqref{eq:diagonalization of Q_G} is positive
semi-definite. Thus from Lemma \ref{lem:diagonalization of Q_G},
$Q_{G}$ is positive semi-definite for all $a_{U}^{\left(n\right)}\in\Pi_{U_{c}}^{n}$.
It then follows from Proposition \ref{pro:positive-semidefinite from boundary to general cases}
that $Q_{G}$ is positive semi-definite for all $a_{U}^{\left(n\right)}\in\Pi_{U}^{n}$.  Therefore the first part of Theorem \ref{thm:positive semi-definite matrix} that $Q_G$ is a positive semi-definite matrix is proved.

Now we proceed to prove the second part of Theorem \ref{thm:positive semi-definite matrix} that $Q_{G}$
is positive semi-definite but not positive definite iff
there exist distinct $i,j\in\{1,\cdots,n\}$, such that $q_{ij}=1$.

Suppose there exists distinct $i$ and $j$ such that $q_{ij}=1$.
Then from Lemma \ref{lem:identical rows and columns}, at least two
rows of $Q_{G}$ are identical. Thus $Q_{G}$ is singular and 
cannot be positive definite.

It remains to show that if for some $a_{U}^{\left(n\right)}\in\Pi_{U}^{n}$,
all $q_{ij}\neq1$ where $i\neq j$, then $Q_{G}$ is positive definite. We prove this
by induction.

The result is clearly true for $n=2$. Suppose it holds for some $n=m$.
Consider $n=m+1$. 

To establish a contradiction suppose there is a
$a_{U}^{\left(n\right)}\in\Pi_{U}^{n}$ for which all $q_{ij}\neq1$ where $i\neq j$,
and yet $Q_{n}$ is not positive definite. This means all $a_{ij}\neq1$ where $i\neq j$.

If for all $i\in\left\{ 1,\ldots,n-1\right\} $, $a_{in}=0$, then
for all $i\in\left\{ 1,\ldots,n-1\right\} $, $q_{in}=0$. Then vertex $v_{n}$ is disconnected from
the vertex set $\left\{ v_{1},\ldots,v_{n-1}\right\} $ and for all
$\left\{ i,j\right\} \subseteq\left\{ 1,\ldots,n-1\right\} $, $q_{ij}$
are valid path probabilities in the subgraph induced on the vertex set
$\left\{ v_{1},\ldots,v_{n-1}\right\} $. Further $Q_{n}=diag\left\{ Q_{n-1},1\right\} $.
By hypothesis $Q_{n-1}$ is positive definite and thus so is $Q_{n}$.
Because of the resulting contradiction with the hypothesis that $Q_n$ is not positive definite, it follows that for at least one $i\in\left\{ 1,\ldots,n-1\right\} $,
$a_{in}\neq0$. 

Through relabeling, without sacrificing generality,
assume that 
$a_{in}\neq0,\forall i\in\left\{ 1,\ldots,l\right\}$ and $a_{in}=0,\forall i\in\left\{ l+1,\ldots,n-1\right\}$.

Define $R$ to be the probabilistic connectivity matrix obtained by keeping all corresponding $a_{ij}$
the same except $a_{ln}$, which is set to zero. Similarly, define $S$ to be the probabilistic connectivity matrix obtained by keeping
all corresponding $a_{ij}$ the same except $a_{ln}$, which is set to $1$. 

Observe,
because of Lemma \ref{lem:multiaffine function}, that $Q_{n}$ is
a convex combination of $R$ and $S$. Further, due to the hypothesis that $Q_{n}$ is not positive definite, both $R$ and $S$ are positive
semi-definite but neither can be positive definite. Otherwise, using Lemma \ref{lem:linear combination of positive semi-definite matrices}, $Q_{n}$ will be positive definite which leads to a contradiction of the hypothesis. In particular $R$ is positive semi-definite,
but not positive definite. Now working with $R$, if $l>1$, then the probabilistic connectivity 
matrix obtained by keeping all corresponding $a_{ij}$ the same except $a_{2n}$,
which is set to zero, is similarly positive semi-definite, but not
positive definite.

Working recursively in this fashion, the probabilistic connectivity matrix obtained by keeping all
$a_{ij}$ the same except $a_{in}=0$, $\forall i\in\left\{ 1,\ldots,n-1\right\} $,
is positive semi-definite, but not positive definite. Thereby a contradiction is established with the conclusion obtained in the earlier paragraph that  for at least one $i\in\left\{ 1,\ldots,n-1\right\} $,
$a_{in}\neq0$. 

\section*{Appendix II: Proof of Theorem \ref{tinc}}

The proof of this Theorem appeals to the celebrated Perron-Frobenius theorem whose basics we recount below  \cite[p. 536]{LT}.

\begin{thm}\la{tpf}
Suppose a matrix $A\in \mR^{n\times n}$ is non-negative and irreducible. Then the largest eigenvalue of $A$ is simple, positive and has a corresponding eigenvector all whose elements are positive. If $A$ is reducible then the largest eigenvector corresponding to its largest eigenvalue can be chosen to be non-negative.
\end{thm}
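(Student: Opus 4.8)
The plan is to prove Perron--Frobenius first for a strictly positive matrix and then bootstrap to the irreducible non-negative case by passing through the strictly positive matrix $B:=(I+A)^{n-1}$, finally recovering the reducible case by a perturbation argument. Assume first $A>0$ entrywise. On the simplex $\Delta=\{x\ge 0:\sum_i x_i=1\}$ the map $f(x)=Ax/\|Ax\|_1$ is continuous (as $Ax>0$ there) and sends $\Delta$ into itself, so Brouwer's fixed point theorem yields $v\in\Delta$ with $Av=\rho v$ and $\rho=\|Av\|_1>0$; since $A>0$ and $v\neq 0$, in fact $v=Av/\rho>0$. To see $\rho$ is the spectral radius, note that for any eigenpair $(\mu,w)$ the triangle inequality and $A\ge 0$ give $|\mu|\,|w|\le A|w|$ entrywise. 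Applying the fixed-point step to $A^T$ produces a strictly positive left eigenvector $y^T$; comparing $y^TAv=\rho\,y^Tv$ shows its eigenvalue is again $\rho$, and pairing $|\mu|\,y^T|w|\le y^TA|w|=\rho\,y^T|w|$ with $y^T|w|>0$ forces $|\mu|\le\rho$.

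Next I would establish simplicity in the positive case. For geometric simplicity, take any real eigenvector $w$ for $\rho$ and choose the largest $t$ with $v-tw\ge 0$; then $v-tw$ has a vanishing coordinate, yet $A(v-tw)=\rho(v-tw)$ would be strictly positive unless $v-tw=0$, so $w$ is a scalar multiple of $v$. For algebraic simplicity, a generalized eigenvector $u$ with $(A-\rho I)u=v$ would give $y^Tv=y^T(A-\rho I)u=0$, contradicting $y>0$ and $v>0$; hence there is no nontrivial Jordan block and the algebraic multiplicity of $\rho$ is one.

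For an irreducible non-negative $A$ the matrix $B=(I+A)^{n-1}$ is strictly positive, so the preceding steps give $B$ a simple Perron root $\rho_B>0$ with positive eigenvector $v$. Because $B=p(A)$ with $p(t)=(1+t)^{n-1}$, the matrices commute, so $Av$ lies in the one-dimensional $\rho_B$-eigenspace of $B$; thus $Av=\lambda v$ with $v>0$ and $\lambda\in\mR$, and irreducibility forces $\lambda>0$. The standard identity expressing the algebraic multiplicity of $\rho_B$ in $B=p(A)$ as $\sum_{\mu:\,p(\mu)=\rho_B} m_A(\mu)$ then shows, since this sum equals $1$, that $\lambda$ is the unique preimage and has $A$-multiplicity one. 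That $\lambda$ is the spectral radius follows exactly as before: constructing a positive left eigenvector of $A$ from $A^T$ (again irreducible non-negative) and pairing it against $|\mu|\,|w|\le A|w|$ yields $|\mu|\le\lambda$ for every eigenvalue $\mu$. For reducible $A\ge 0$, set $A_\epsilon=A+\epsilon J$ with $J$ the all-ones matrix, so each $A_\epsilon>0$ has a positive Perron eigenvector $v_\epsilon\in\Delta$; by compactness of $\Delta$ a subsequence gives $v_\epsilon\to v\ge 0$ with $v\neq 0$, and by continuity of the spectral radius $\lambda_{\max}(A_\epsilon)\to\lambda_{\max}(A)$, whence $Av=\lambda_{\max}(A)\,v$ with $v\ge 0$, which is all that is claimed in this case.

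The main obstacle I anticipate is the simplicity argument and its transfer. Ruling out a Jordan block in the positive case and, above all, passing algebraic simplicity cleanly from $B$ to $A$ are the delicate points, since they rely on the multiplicity identity for $p(A)$ rather than on any crude eigenvalue bound. It is worth emphasizing that the left-eigenvector pairing is what keeps the spectral-radius claim honest: the inequality $|1+\mu|\le 1+\lambda$ that one extracts from $B$ alone only confines $\mu$ to a disk centered at $-1$ and does \emph{not} by itself imply $|\mu|\le\lambda$, so the direct pairing argument applied to $A$ is indispensable.
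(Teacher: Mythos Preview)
The paper does not prove this theorem at all: it is stated as the classical Perron--Frobenius theorem and simply cited from \cite[p.~536]{LT}, then used as a black box in the proof of Lemma~\ref{lab}. So there is no ``paper's own proof'' to compare against.

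On its own merits, your argument is a sound and essentially complete outline of a standard proof. The Brouwer fixed-point step on the simplex, the left-eigenvector pairing to pin down the spectral radius, the $v-tw$ argument for geometric simplicity, and the Jordan-block obstruction via the left eigenvector are all correct. The bootstrap through $B=(I+A)^{n-1}$ is the classical route and your transfer of algebraic simplicity via the characteristic-polynomial identity $\chi_{p(A)}(t)=\prod_i(t-p(\mu_i))$ is legitimate; your explicit warning that $|1+\mu|\le 1+\lambda$ alone does not give $|\mu|\le\lambda$, and that one must repeat the left-eigenvector pairing directly for $A$, is exactly the right caution. The perturbation $A_\epsilon=A+\epsilon J$ and compactness of the simplex handle the reducible case as stated. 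Two minor points you may wish to tighten: (i) the claim ``irreducibility forces $\lambda>0$'' deserves one line---$Av=0$ with $v>0$ and $A\ge 0$ forces $A=0$, which is reducible for $n\ge 2$; (ii) in the geometric-simplicity step, make explicit that a real eigenvector suffices since $\rho$ is real, and that one may replace $w$ by $-w$ to ensure a positive coordinate before choosing the maximal $t$.
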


We also need the following lemma to prove Theorem \ref{tinc}.



\begin{lemma} \la{lab}
Suppose $A=A^T\neq B=B^T$ are  non-negative, irreducible, real matrices, and $B-A$ is a non-zero, non-negative matrix. Then: $\lambda_{\max}(A)< \lambda_{\max}(B )$.
\end{lemma}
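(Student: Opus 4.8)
The plan is to combine the Perron--Frobenius theorem (Theorem~\ref{tpf}) with the Rayleigh-quotient characterisation of the largest eigenvalue of a symmetric matrix. First I would apply Theorem~\ref{tpf} to $A$: since $A$ is non-negative and irreducible, there is an eigenvector $v$ associated with $\lambda_{\max}(A)$ all of whose entries are strictly positive. Normalise it so that $v^Tv=1$. Because $A=A^T$, this gives the identity $v^TAv=\lambda_{\max}(A)\,v^Tv=\lambda_{\max}(A)$, which will serve as the baseline value.

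Next I would evaluate the quadratic form of $B$ at the same vector $v$. Writing $B=A+(B-A)$ yields $v^TBv=\lambda_{\max}(A)+v^T(B-A)v$. The matrix $B-A$ is symmetric (difference of symmetric matrices), non-negative, and non-zero, so it has at least one strictly positive entry. Expanding $v^T(B-A)v=\sum_{i,j}v_i(B-A)_{ij}v_j$, every term is non-negative since the $v_i$ and the $(B-A)_{ij}$ are all non-negative, and at least one term is strictly positive because the relevant $v_i$ are strictly positive (pairing $(i,j)$ with $(j,i)$ via symmetry if the positive entry is off-diagonal). Hence $v^T(B-A)v>0$, so $v^TBv>\lambda_{\max}(A)$.

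Finally, since $B=B^T$, its largest eigenvalue admits the variational description $\lambda_{\max}(B)=\max_{\|x\|_2=1}x^TBx$, so in particular $\lambda_{\max}(B)\ge v^TBv$. Chaining the two inequalities gives $\lambda_{\max}(B)\ge v^TBv>\lambda_{\max}(A)$, which is exactly the assertion.

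I do not expect a serious obstacle here, but the one point that genuinely needs the hypotheses is the strict positivity of the Perron eigenvector $v$: if some entries of $v$ were allowed to vanish, then $v^T(B-A)v$ could be zero when the support of $B-A$ misses the support of $v$, and only the non-strict inequality $\lambda_{\max}(A)\le\lambda_{\max}(B)$ would survive. This is why irreducibility of $A$ is used in an essential way. It is worth remarking that irreducibility of $B$ is not actually needed for this argument; the working hypotheses are really the symmetry of both matrices together with the non-negativity and irreducibility of $A$.
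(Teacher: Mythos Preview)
Your proof is correct and is essentially the same argument as the paper's: apply Perron--Frobenius to $A$ to obtain a strictly positive eigenvector, note that $v^{T}(B-A)v>0$ because $B-A$ is non-negative and non-zero while $v$ has all entries positive, and then invoke the Rayleigh-quotient bound $\lambda_{\max}(B)\ge v^{T}Bv$. Your closing observation that irreducibility of $B$ is not actually used is a nice extra remark not made explicit in the paper.
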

\begin{proof}Observe at least one element of $B-A$ is positive. From Theorem \ref{tpf}, $x\in \mR^n$, the eigenvector corresponding to the largest
eigenvalue of $A$ can be chosen to have all elements positive.
Then the result follows from the fact that:
\beas
\lambda_{\max}(A)x^Tx &=&x^TAx \\
&=&x^TBx- x^T(B-A)x\\
&<&x^TBx \\
&\leq &\lambda_{\max}(B)x^Tx
\eeas
as $B-A$ is a non-zero, non-negative matrix.
\end{proof}

Turning to the proof of Theorem \ref{tinc} we note that the result follows directly from Lemma \ref{lab} and the fact $Q_{G'}$ and $Q_G$ satisfy the requirements of $B$ and $A$,
respectively. 


\end{document}